\newcounter{ct}
\newcommand{\markdent}[1]{\forloop{ct}{0}{\value{ct} < #1}{\hspace{\algorithmicindent}}}
\newcommand{\markcomment}[1]{\Statex\markdent{#1}}
\def\checkmark{\tikz\fill[scale=0.4](0,.35) -- (.25,0) -- (1,.7) -- (.25,.15) -- cycle;} 
\newcommand{\ADN}{{Anonymous Dynamic Network}\xspace}
\newcommand{\ADNs}{{Anonymous Dynamic Networks}\xspace}
\newcommand{\name}{\textsc{Methodical Counting}\xspace}
\newcommand{\mig}[1]{\textcolor{blue}{#1}}
\newcommand{\dk}[1]{\textcolor{red}{#1}}
\renewcommand{\mig}[1]{#1}
\renewcommand{\dk}[1]{#1}
\renewcommand{\vec}[1]{\mathbf{#1}}
\newtheorem{theorem}{Theorem}
\newtheorem{lemma}{Lemma}
\newtheorem{corollary}{Corollary}
\newtheorem{claim}{Claim}
\begin{document}

%\setcounter{chapter}{2} % If you are doing your chapter as chapter one,
%\setcounter{section}{3} % comment these two lines out.

%\title{\Large SIAM/ACM Preprint Series Macros for
%Use With LaTeX\thanks{Supported by GSF grants ABC123, DEF456, and GHI789.}}
\title{Polynomial Counting in\\ \ADNs\\\vspace{.1in}\large{with Applications to 
\dk{Anonymous Dynamic} Algebraic Computations}}

%\author{Corey Gray\thanks{Society for Industrial and Applied Mathematics.} \\
%\and
%Tricia Manning\thanks{Society for Industrial and Applied Mathematics.}}
\author{
Dariusz~R.~Kowalski 
\thanks{Computer Science Department, 
University of Liverpool, 
Liverpool, UK.
E-mail: D.Kowalski@liverpool.ac.uk}\\
\and~Miguel~A.~Mosteiro 
\thanks{Computer Science Department, 
Pace University,
New York, NY, USA.
E-mail: mmosteiro@pace.edu}
}

\date{}

\maketitle

% Copyright Statement
% When submitting your final paper to a SIAM proceedings, it is requested that you include 
% the appropriate copyright in the footer of the paper.  The copyright added should be 
% consistent with the copyright selected on the copyright form submitted with the paper.
% Please note that "20XX" should be changed to the year of the meeting.

% Default Copyright Statement
%\fancyfoot[R]{\footnotesize{\textbf{Copyright \textcopyright\ 20XX by SIAM\\
%Unauthorized reproduction of this article is prohibited}}}

% Depending on which copyright you agree to when you sign the copyright form, the copyright 
% can be changed to one of the following after commenting out the default copyright statement
% above.

%\fancyfoot[R]{\footnotesize{\textbf{Copyright \textcopyright\ 20XX\\
%Copyright for this paper is retained by authors}}}

%\fancyfoot[R]{\footnotesize{\textbf{Copyright \textcopyright\ 20XX\\
%Copyright retained by principal author's organization}}}

%\pagenumbering{arabic}
%\setcounter{page}{1}%Leave this line commented out.

\begin{abstract} \small\baselineskip=9pt 
%This is the text of my abstract. It is a brief description of my
%paper, outlining the purposes and goals I am trying to address.

%!TEX root = ./KM_counting_SIAM.tex

Starting with Michail, Chatzigiannakis, and Spirakis work~\cite{spirakis}, the problem of \emph{Counting} the number of nodes in \ADNs has attracted a lot of attention. 
The problem is challenging because nodes are indistinguishable (they lack identifiers and execute the same program) and the topology may change arbitrarily from round to round of communication, as long as the network is connected in each round. 
The problem is central in distributed computing as the number of participants is frequently needed to make important decisions, such as termination, agreement, synchronization,
and many others.
A variety of algorithms built on top of \emph{mass-distribution} techniques have been presented, analyzed, and also experimentally evaluated;
some of them assumed additional knowledge of network characteristics,
such as bounded degree or given upper bound on the network size. 
However, the question of whether Counting can be solved deterministically in sub-exponential time remained open. 
In this work, we answer this question positively by presenting \name, which runs in polynomial time and requires no knowledge of network characteristics. 
\mig{Moreover, we also show how to extend \name to compute the sum of input values and more complex functions without extra cost.}
Our analysis leverages previous work on random walks in evolving graphs, \mig{combined with carefully chosen alarms in the algorithm that control the process and its parameters}. 
To the best of our knowledge, our Counting algorithm and its extensions to other algebraic and Boolean functions are the first that can be implemented in practice with worst-case guarantees. 
\end{abstract}

%\IEEEraisesectionheading{\section{Introduction}\label{sec:introduction}}
%% Computer Society journal (but not conference!) papers do something unusual
%% with the very first section heading (almost always called "Introduction").
%% They place it ABOVE the main text! IEEEtran.cls does not automatically do
%% this for you, but you can achieve this effect with the provided
%% \IEEEraisesectionheading{} command. Note the need to keep any \label that
%% is to refer to the section immediately after \section in the above as
%% \IEEEraisesectionheading puts \section within a raised box.

%!TEX root = ./KM_counting_SIAM.tex

\section{Introduction}\label{sec:introduction}

In this work, we address the \mig{standing} question of whether the number of nodes of an  \ADN (ADN) can be counted deterministically in polynomial time or not.
We answer this question positively by presenting the \name algorithm, and proving formally that after a polynomial number of rounds of communication all nodes know the size of the network and stop. 

The problem has been thoroughly studied~\cite{spirakis,conscious,oracle,experimentalConscious,LunaB15,opodisCounting,netysCounting} because Counting 
is central for distributed computing. Indeed, more complex tasks need the network size to make various decisions on state agreement, synchronization, termination, and others.
%decide termination. 
However, \ADNs pose a particularly challenging scenario. On one hand, nodes are indistinguishable from each other. For instance, they may lack identifiers or their number may be so massive that keeping record of them is not feasible. On the other hand, the topology of the network is highly dynamic. Indeed, the subsets of nodes that may communicate with each other may change all the time. 
All these features make ADN a valid model for anonymous ad hoc
communication and computation.

In such a restrictive scenario, finding a way of providing theoretical guarantees of deterministic polynomial time has been elusive until now. Indeed, previous papers have either 
weaken the objective (e.g., computing only upper bound, only stochastic guarantees, etc.), 
assumed availability of network information (e.g., maximum number of neighbors, size upper bound, etc.), 
relied on a stronger model of communication,
or provided only superpolynomial time guarantees.

\name uses no information about the network. After completing its execution, all nodes obtain the exact size of the network and stop. Moreover, they stop all at the same time, allowing the algorithm to be concatenated with other computations.

Our algorithm is based on nodes continuously sharing some magnitude, which we call \emph{potential},\footnote{In previous related works this quantity, used in
a different way, was termed \emph{energy}. We steer away from such denomination to avoid confusion with node energy supply.} resembling \emph{mass-distribution} and \emph{push-pull} algorithms. 
Unlike previous algorithms, in \name carefully and periodically (i.e. , ``methodically'') some potential is removed from the network, rather than greedily doing so continuously. 
This approach is combined with another methodological innovation testing whether the candidate
value (for the network size) is within some polynomial range of the actual network size.
This complex strategy yields an algorithm 
\dk{in which the progress in mass-distribution} 
%that 
can be 
analyzed 
%similarly 
%%%modeled 
as a sequence of \dk{parametrized} Markov chains (even though the algorithm itself is purely deterministic) \dk{enhanced by mass drift and alarms controlling the process
and its parameters}. 
Our analysis approach opens the path to study more complex tasks  in \ADNs \mig{applying similar techniques.}

\mig{
Finally, we also present a variety of extensions of \name to compute more complex functions. Most notably, we present an extension that, concurrently with finding the network size, computes the sum of input values held at each node without asymptotic time overhead. Having a method to compute the sum and network size, more complex computations are possible in polynomial time as well. Indeed, we also describe how to compute a variety of algebraic and Boolean functions.
To the best of our knowledge, ours are the first algorithms for anonymous dynamic Counting and other algebraic computations that can be implemented in practice with worst-case guarantees. 
}

\subsubsection*{Roadmap:} The rest of the paper is organized as follows. We specify the model and notation details in Section~\ref{prelim}. Then, we overview previous work in Section~\ref{relwork} and present our results in Section~\ref{results}. Section~\ref{algorithm} includes the details of \name, and we prove its correctness and running time in Section~\ref{analysis}. Extensions to other functions are presented in Section~\ref{s:extensions}. %and improvements to \name are described in Section~\ref{conclude}.

\section{Model, Problem, and Notation}
\label{prelim}

\subsubsection*{The Counting Problem:}
The definition of the problem is simple. An algorithm solves the Counting Problem if, after completing its execution, all nodes have obtained the exact size of the network and stop.

\subsubsection*{\ADNs:}
The following model is customary in the \ADNs literature.
We consider a network composed by a set $V$ of $n>1$ network \emph{nodes} with processing and communication capabilities. 
It was shown in~\cite{spirakis} that Counting cannot be solved in Anonymous Networks without the availability of at least one distinguished node in the network.
Thus, we assume the presence of such node called \emph{leader}.
Aside from the leader, we assume that all other nodes are indistinguishable from each other. That is, we do not assume the availability of labels or identifiers, and all non-leader nodes execute exactly the same program.

Each pair of nodes that are able to communicate define a communication \emph{link}, and the set of links is called the \emph{topology} of the network. 
The nodes in a communication link are called \emph{neighbors}.
The event of sending a message to neighbors is called a \emph{broadcast} or \emph{transmission}. 
Nodes and links are reliable, in the sense that no communication or node failures occur.
Hence, a broadcasted message is received by all neighbors.
Moreover, links are \emph{symmetric}, that is, if node $a$ is able to send a message to node $b$, then $b$ is able to send a message to $a$.

Without loss of generality, we discretize time in \emph{rounds}.
In any given round, a node may broadcast a message, receive all messages from broadcasting neighbors, and carry out some computations, in that order. The time taken by the computations is assumed to be negligible. 

The set of links among nodes may change from round to round, and nodes have no way of knowing which were the neighbors they had before.
These topology changes are arbitrary, limited only to maintain the network connected
in each round. That is, at any given round the topology is such that there is a \emph{path}, i.e., a sequence of links, between each pair of nodes, but the set of links may change arbitrarily from round to round. This adversarial model of dynamics was called \emph{$1$-interval connectivity} in~\cite{KuhnLO2010}.

The following notation will be used. 
The maximum number of neighbors that any node may have at any given time is called the \emph{dynamic maximum degree} and it is denoted as $\Delta$. 
The maximum length of a path between any pair of nodes at any given time is called the \emph{dynamic diameter} and it is denoted as $D$. 
The maximum length of an opportunistic path between any pair of nodes over many time slots is called the \emph{chronopath}~\cite{FCFMMZ:randomgeocast} and it is denoted as $\mathcal{D}$.

%!TEX root = ./KM_counting_SIAM.tex

\begin{table*}[htbp]
\centering
\footnotesize
\begin{tabular}{|c|c|c|c|c|c|c|}
\hline
\rule{0pt}{4ex}
\multirow{2}{*}{algorithm}&\multicolumn{2}{c|}{needs}&\multirow{2}{*}{computes}&\multirow{2}{*}{stops?}&\multicolumn{2}{c|}{complexity}\\
[.1in]
\cline{2-3}
\cline{6-7}
\rule{0pt}{4ex}
&\begin{tabular}{c}size\\ upper\\ bound\\ $N$\end{tabular}&\begin{tabular}{c}dynamic\\ maximum\\ degree u.b.\\$d_{\max}$\end{tabular}&&&time&space\\
[.1in]
\hline
\hline
\rule{0pt}{4ex}
\begin{tabular}{c}\emph{Degree}\\\emph{Counting}~\cite{spirakis}\end{tabular}&&\checkmark&$O(d_{\max}^n)$&\checkmark&$O(n)$&\\
[.1in]
\hline
\rule{0pt}{4ex}
\emph{Conscious}~\cite{conscious}&\checkmark&\checkmark&$n$&\checkmark&\begin{tabular}{c}$O(e^{N^2}N^3) \Rightarrow$\\ $O(e^{d_{\max}^{2n}}d_{\max}^{3n})$ using~\cite{spirakis}\end{tabular}&\\
[.1in]
\hline
\rule{0pt}{4ex}
\emph{Unconscious}~\cite{conscious}&&&$n$&No&\begin{tabular}{c}No theoretical\\ bounds\end{tabular}&\\
[.1in]
\hline
\rule{0pt}{4ex}
$\mathcal{A}_{\mathcal{O}^P}$~\cite{oracle}&&\begin{tabular}{c}Degree oracle\\ for each node\end{tabular}&$n$&Eventually&Unknown&\\
[.1in]
\hline
\rule{0pt}{4ex}
\textsf{EXT}~\cite{LunaB15}&&&$n$&\checkmark&$O(n^{n+4})$&EXPSPACE\\
[.1in]
\hline
\rule{0pt}{4ex}
\begin{tabular}{c}\textsc{Incremental}\\\textsc{Counting}~\cite{opodisCounting}\end{tabular}&&\checkmark&$n$&\checkmark&$O\left(n\left(2d_{\max}\right)^{n+1}\frac{\ln n}{\ln d_{\max}}\right)$&\\
[.1in]
\hline
\hline
\rule{0pt}{4ex}
\begin{tabular}{c}\textsc{Methodical}\\ \textsc{Counting}\\$[$This work$]$\end{tabular}&&&$n$&\checkmark&$O(n^5\ln^2 n)$&PSPACE\\
[.1in]
\hline
\end{tabular}
\caption{Comparison of Counting protocols for \ADNs.}
\label{table}
\end{table*}

%!TEX root = ./KM_counting_SIAM.tex

\section{Previous Work}
\label{relwork}
In this section we overview previous work directly related to this paper. A comprehensive overview of work related to \ADNs can be found in a survey by Casteigts et al.~\cite{arnaudSurvey} and references in the papers cited here.
The related work overviewed, in comparison with our results, is summarized in Table~\ref{table}.

With respect to lower bounds, it was proved in~\cite{baldoni} that at least $\Omega(\log n)$ rounds are needed, even if $D$ is constant. Also, a trivial observation is that $\Omega(\mathcal{D})$ is a lower bound as at least one node needs to hear about all other nodes to obtain the right count, and the chronopath $\mathcal{D}$ is the largest number of hops that a message from some node needs to take to reach other node in the network, possibly along multiple time slots.

Counting was already studied in~\cite{spirakis}, together with the problem of \emph{Naming}, for dynamic and static networks. It was shown in this work that it is impossible to solve Counting without the presence of a distinguished node, even if nodes do not move. The Counting protocol presented for \ADNs requires knowledge of an upper bound on $\Delta$, and the count obtained is only an upper bound on the network size, which may be as bad as exponential. 

An exact count is obtained by the Conscious Counting algorithm presented in~\cite{conscious}. However, the computation relies on knowing initially an upper bound on the network size. The running time of this protocol is exponential only if the initial upper bound is tight. 

In the same work and follow-up papers~\cite{oracle,experimentalConscious}, the authors presented protocols under more challenging scenarios where $\Delta$ is not known. However, either the protocol does not terminate~\cite{conscious}, and hence the running time cannot be bounded, or the protocol is terminated heuristically~\cite{experimentalConscious}. In experiments~\cite{experimentalConscious}, such heuristic was found to perform well on dense topologies, but for other topologies the error rate was high. That is, the results only apply to dense \ADNs. Another protocol in~\cite{oracle} is shown to terminate eventually, without running-time guarantees and under the assumption of having for each node an estimate of the number of neighbors in each round. In~\cite{spirakis} it was conjectured that some knowledge of the network  such as the latter would be necessary, but the conjecture was disproved later in~\cite{LunaB15}. On the other hand the protocol in~\cite{LunaB15} requires exponential space. 

Recently, a protocol called Incremental Counting was presented in~\cite{opodisCounting}. This algorithm reduced exponentially the running time guarantees with respect to previous works developed under the same model. Incremental Counting obtains the exact count, all nodes terminate simultaneously, the topology dynamics is only limited to $1$-interval connectivity, it only requires polynomial space, and it only requires knowledge of the dynamic maximum degree $\Delta$. The superpolynomial running time proved still does not provide enough guarantee for practical application, but reducing from doubly-exponential to exponential was an important step towards understanding the complexity of Counting. 

In a follow-up paper~\cite{netysCounting}, Incremental Counting was tested experimentally showing a promising polynomial behavior. The study was conducted on pessimistic inputs designed to slow the convergence, such as bounded-degree trees rooted at the leader uniformly chosen at random for each round, and a single path starting at the leader with non-leader nodes permuted uniformly at random for each round. The protocol was also tested on static versions of the inputs mentioned, classic random graphs, and networks where some disconnection is allowed. The results exposed important observations. Indeed, even for topologies that stretch the dynamic diameter, the running times obtained are below $\Delta n^3$. It was also observed that random graphs, as used in previous experimental studies~\cite{experimentalConscious}, reduce the convergence time, and therefore are not a good choice to indicate worst-case behavior. These experiments showed good behavior even for networks that sometimes are disconnected, indicating that more relaxed models of dynamics, such as ($\alpha,\beta$)-connectivity~\cite{FCFMMZ:randomgeocast,geocast}, are worth to study. All in all, the experiments in~\cite{netysCounting} showed that Incremental Counting behaves well in a variety of pessimistic inputs, but not having a proof of what a worst-case input looks like, and being the experiments restricted to a range of values of $n$ far from the expected massive size of an \ADN, a theoretical proof of polynomial time remained an open problem even from a practical perspective.

In a recent manuscript~\cite{BaldoniTR} a polynomial Counting algorithm is presented relying on the availability of an algorithm to compute average with polynomial convergence time. Such average computation is modeled as a Markov chain with underlying doubly-stochastic matrix, which requires topology information within two hops (cf.~\cite{nedic2009}). In our model of \ADN, such information is not available,
and gathering it may not be possible due to possible topology changes from round to
round.

%!TEX root = ./KM_counting_SIAM.tex

\section{Our Contributions}
\label{results}

We present and analyze a deterministic distributed algorithm to compute the number of nodes in an \ADN. We call such algorithm \name. As opposed to previous works, our algorithm does not require any knowledge of network characteristics, such as dynamic maximum degree or an upper bound on the size. 
After $O(n^5\ln^2 n)$ communication rounds of running \name, all nodes obtain the network size and stop at the same round. 
To the best of our knowledge, this is the first polynomial deterministic Counting algorithm
in the pure model of \ADN.

Our algorithm is based on distributing potential in a mass-distribution fashion, similarly as previous works for Counting. The main algorithmic novelty in our approach is that the leader participates in the process as any other node, removing potential only after it has accumulated enough. This approach allowed us to leverage previous work on random walks in evolving graphs. For this approach to work, we combine it with testing whether the candidate value for the network size is polynomially close to the actual value.
Our approach also opens the path to study more complex computations in \ADNs using the same analysis.

\mig{Finally, we also present extensions of \name to compute more complex functions. Most notably, we show how to modify \name to compute the sum of input values held by nodes at the same time than counting. Having an algorithm to compute the network size and the sum of input values, we also show how to compute other algebraic and Boolean functions.}

%!TEX root = ./KM_counting_SIAM.tex

\section{\name}
\label{algorithm}

\begin{figure*}
\hrule
\caption{\name algorithm for the leader. $N$ is the set of neighbors of the leader in the current round. The parameters $d,p,r$ and $\tau$ are as defined in Theorem~\ref{thm}.}
\label{leaderAlg}
\vspace{.1in}
\hrule
\begin{algorithmic}[1]
\Procedure{Count}{}%{$a,b$}
	\State $\rho\gets 0$ \Comment{accumulator of consumed potential}
	\State $\Phi\gets 0$ \Comment{current potential}
	\State $k\gets 2$ \Comment{current estimate}
	\State $status\gets normal$ \Comment{status$=$normal$|$alarm$|$done}
	\While{$status\neq done$}   \Comment{\dk{iterating} epochs} \label{epochs}
		\For{$phase=1$ to $p$} \Comment{\dk{iterating} phases} \label{phases}
			\For{$round=1$ to $r$} \Comment{\dk{iterating} rounds} \label{rounds}
				\State Broadcast $\langle\Phi,status\rangle$
				and Receive $\langle\Phi_i,status_i\rangle, \forall i\in N$
				\If{$status=normal$ {\bf and} $|N|\leq d-1$ {\bf and} $\forall i\in N:status_i=normal$} 
				%\Comment{update potential}	
					\State $\Phi\gets \Phi + \sum_{i\in N}\Phi_i/d - |N|\Phi/d$ 
					\Comment{\dk{update potential}}	
					\label{potupdate}
				\Else \Comment{$k$ is wrong} \label{leadertoomany}
					\State $status\gets alarm$\label{alarminsecondleader}
					\State $\Phi\gets 1$
				\EndIf
			\EndFor 
			\markcomment{3}{{\tt\slash* $r$ rounds completed *\slash}}
			\If{$phase=1$ {\bf and} $\Phi> \tau$} \Comment{$k$ is wrong} \label{leaderthreshold}
					\State $status\gets alarm$
					\State $\Phi\gets 1$
			\EndIf 
			\If{$status=normal$}	\Comment{prepare for next phase}
				\State $\rho \gets \rho + \Phi$ \label{rhoupdate}
				\State $\Phi \gets 0$ \label{phireset}
			\EndIf
		\EndFor 
		\markcomment{2}{{\tt\slash* $p$ phases completed *\slash}}
		\If{$status=normal$ {\bf and} $k-1-1/k\leq \rho\leq k-1$} \Comment{the size is $k$} \label{range}
			\State $status\gets done$
		\Else \Comment{prepare for next epoch} \label{reset}
			\State $\rho\gets 0$
			\State $\Phi\gets 0$
			\State $k\gets k+1$
			\State $status\gets normal$ 
		\EndIf
		\For{$round=1$ to $k$} \Comment{disseminate termination} \label{leadernotification}
			\State Broadcast $\langle status\rangle$
			and Receive $\langle status_i\rangle, \forall i\in N$
		\EndFor 
		\markcomment{2}{{\tt\slash* epoch completed *\slash}}
	\EndWhile 
	\State \textbf{return} $k$ 
\EndProcedure
\end{algorithmic}
\hrule
\end{figure*}

\begin{figure*}
\hrule
\caption{\name algorithm for each non-leader node $i$. $N$ is the set of neighbors of $i$ in the current round. The parameters $d,p,r$ and $\tau$ are as defined in Theorem~\ref{thm}.}
\label{otherAlg}
\vspace{.1in}
\hrule
\begin{algorithmic}[1]
\Procedure{Count}{}
	\State $\Phi\gets 0$ \Comment{current potential}
	\State $k\gets 2$ \Comment{current estimate}
	\State $status\gets normal$ \Comment{status$=$normal$|$alarm$|$done}
	\While{$status\neq done$}   \Comment{\dk{iterating} epochs} \label{epochs}
		\For{$phase=1$ to $p$} \Comment{\dk{iterating} phases} \label{phases}
			\For{$round=1$ to $r$} \Comment{\dk{iterating} rounds} \label{rounds}
				\State Broadcast $\langle\Phi,status\rangle$
				and Receive $\langle\Phi_i,status_i\rangle, \forall i\in N$
				\If{$status=normal$ {\bf and} $|N|\leq d-1$ {\bf and} $\forall i\in N:status_i=normal$} 
				%\Comment{update potential}	
					\State $\Phi\gets \Phi + \sum_{i\in N}\Phi_i/d - |N|\Phi/d$ 
					\Comment{update potential}
					\label{newpot}
				\Else \Comment{$k$ is wrong} \label{othertoomany}
					\State $status\gets alarm$\label{alarminsecondother}
					\State $\Phi\gets 1$
				\EndIf
			\EndFor 
			\markcomment{3}{{\tt\slash* $r$ rounds completed *\slash}}
			\If{$phase=1$ {\bf and} $\Phi> \tau$} \Comment{$k$ is wrong} \label{otherthreshold}
					\State $status\gets alarm$ \label{thresholdalarm}
					\State $\Phi\gets 1$
			\EndIf 
		\EndFor 
		\markcomment{2}{{\tt\slash* $p$ phases completed *\slash}}
		\For{$round=1$ to $k$} \Comment{disseminate termination} \label{othernotification}
			\State Broadcast $\langle status\rangle$
			and Receive $\langle status_i\rangle, \forall i\in N$
			\If{$\exists i\in N:status_i=done$}
				\State $status\gets done$
			\EndIf
		\EndFor 
		\If{$status\neq done$}
			\State $k\gets k+1$
			\State $status\gets normal$
		\EndIf
		\markcomment{2}{{\tt\slash* epoch completed *\slash}}
	\EndWhile 
	\State \textbf{return} $k$ 
\EndProcedure
\end{algorithmic}
\hrule
\end{figure*}

In this section we present \name. First, we give the intuition of the algorithm, the details can be found in Figures~\ref{leaderAlg} and~\ref{otherAlg}. (References to algorithm lines are given as $\langle figure\#\rangle.\langle line\#\rangle$.)

Initially, the leader is assigned a potential of $0$ and all the other nodes are assigned a potential of~$1$. 
Then, the algorithm is composed by epochs, each of which is divided into phases composed by rounds of communication. Epoch $k$ corresponds to a size estimate $k$ that is iteratively increased from epoch to epoch until the correct value $n$ is found.
Each epoch is divided into $p$ phases. The purpose of each phase is for the leader to collect as much potential as possible from the other nodes in a mass-distribution fashion as follows.

Each phase is composed by $r$ rounds of communication. In each round, each node\footnote{As opposed to previous work, in \name the leader also follows this procedure.} broadcasts its potential and receives the potential of all its neighbors. Each node keeps only a fraction $1/d$ of the potentials received. The parameters $p$, $r$, and $d$ are functions of $k$. 
The specific functions needed to guarantee correctness \dk{and saught efficiency} are defined in Theorem~\ref{thm}. 
\dk{This varying way of distributing potential is different from previous approaches
using mass distribution.} 
After communication, each node updates its own potential accordingly (cf. Lines~\ref{leaderAlg}.\ref{potupdate} and~\ref{otherAlg}.\ref{newpot}). That is, it adds a fraction $1/d$ of the potentials received, and subtracts a fraction $1/d$ of the potential broadcasted times the number of potentials received. Then, a new round starts. 

At the end of each phase the leader ``consumes'' its potential. That is, it increases an internal accumulator $\rho$ with its current potential, which is zeroed for \dk{starting} the next phase (cf. Lines~\ref{leaderAlg}.\ref{rhoupdate} and~\ref{leaderAlg}.\ref{phireset}). 
A node stops the update of potential described, raises its potential to $1$, and broadcasts an alarm in each round until the end of the epoch if any of the following happens: 1) at the end of the first phase its potential is above some threshold $\tau$ as defined in Theorem~\ref{thm} (cf. Lines~\ref{leaderAlg}.\ref{leaderthreshold} and~\ref{otherAlg}.\ref{otherthreshold}), 2) at any round it receives more than $d-1$ messages
\dk{(cf. Lines~\ref{leaderAlg}.\ref{leadertoomany} and~\ref{otherAlg}.\ref{othertoomany})}, or 3) at any round it receives an alarm (cf. Lines~\ref{leaderAlg}.\ref{leadertoomany} and~\ref{otherAlg}.\ref{othertoomany}). 
The alarm for case 1) allows the leader to detect that the estimate is wrong when $k^{1+\epsilon}<n$ for some $\epsilon>0$ (Lemmas~\ref{unalarmed} and~\ref{alarmsoon}), the alarm for case 2) allows the leader to detect that $d$ is too small and hence the estimate is wrong, and the alarm for case 3) allows dissemination of all alarms.
In the alarm status the potential is set to $1$ to facilitate the analysis, but it is not strictly needed by the algorithm.

At the end of each epoch, the leader checks the value of $\rho$. If $k-1-1/k\leq \rho\leq k-1$ the current estimate is correct and the leader changes its status to ``done'' (cf. Line~\ref{leaderAlg}.\ref{range}). Otherwise, all its variables are reset to start a new epoch with the next estimate (cf. Line~\ref{leaderAlg}.\ref{reset}).
Before starting a new epoch the network is flooded with the status of the leader for $k$ rounds (cf. Lines~\ref{leaderAlg}.\ref{leadernotification} and~\ref{otherAlg}.\ref{othernotification}). If $k=n$, \dk{the leader initiates message ``done''} and the $k$ rounds are enough for all the nodes to receive the ``done'' status and after completing the $k$ rounds stop. Otherwise, nodes will not receive the ``done'' status and after completing the $k$ rounds they start a new epoch.

%!TEX root = ./KM_counting_SIAM.tex

\section{Analysis}
\label{analysis}

%\subsubsection*{Notation:} 

In this section we analyze \name. 
References to algorithm lines are given as $\langle figure\#\rangle.\langle line\#\rangle$.
\mig{
We will use the standard notation for the $L_p$ norm of vector $\vec{x}=(x_1,x_2,\dots,x_n)$ as $||\vec{x}||_p = \left(\sum_{i=1}^n |x_i|^p\right)^{1/p}$, for any $p\geq 1$.
}
Only for the analysis, nodes are labeled as $0,1,2,\dots,n-1$, where the leader has label $0$.
The potential of a node $i$ at the beginning of round $t$ is denoted as $\Phi_{t}[i]$, and the potential of all nodes at the beginning of round $t$ is denoted as a vector $\vec{\Phi}_t$. The aggregated potential is then $||\vec{\Phi}_t||_1$. 
The subindex $t$ is used for rounds, phases, or dropped as needed.
We will refer to the potential right after the last round of a phase as $\vec{\Phi}_{r+1}$. Such round does not exist in the algorithm, but we use this notation to distinguish between the potential right before the leader consumes its own potential (cf. Line~\ref{leaderAlg}.\ref{reset}) and the potential at the beginning of the first round of the next phase.

First, we provide a broad description of our analysis of \name. 
%(Refer to the algorithm in Figures~\ref{leaderAlg} and~\ref{otherAlg} as needed.)
Consider the vector of potentials $\vec{\Phi}_i$ held by nodes at the beginning of any given phase $i$.
The way that potentials are updated in each round (cf. Lines~\ref{leaderAlg}.\ref{potupdate} and~\ref{otherAlg}.\ref{newpot}) is equivalent to the progression of a $d$-lazy random walk on the evolving graph underlying the network topology~\cite{michal}, where the initial vector of potentials is equivalent to an initial distribution $\vec{p}_i$ on the overall potential $||\vec{\Phi}_i||_1$
\dk{and the probability of choosing a specific neighbor is $1/d$}. 
For instance, the initial vector of potentials $\vec{\Phi}_0=\langle0,1,1,\dots\rangle$, corresponds to a distribution $\vec{p}_0=\langle 0,1/(n-1),1/(n-1),\dots\rangle$ on the initial $||\vec{\Phi}_0||_1=n-1$.

\dk{Note that our \name\ is not a simple ``derandomization'' of the lazy
random walk on evolving graphs. 
First, in the \ADN\ model neighbors cannot be 
distinguished, and even their number is unknown at transmission time
(only at receiving time the node learns the number of its neighbors). 
Second, due to unknown network parameters,
it may happen in an execution of \name\ that the total potential received
could be bigger than $1$. Third, 
%due to similar reasons, 
our algorithm does not know a priori when to terminate and provide result
even with some reasonable accuracy, as the formulas on mixing and cover time
of lazy random walks depend on (a priori unknown) number of nodes $n$. 
Nevertheless, we can still use some results obtained in the context of 
analogous lazy random walks
in order to prove useful properties of parts of algorithm \name,
namely, some parts in which parameters are temporarily fixed and
the number of received messages does not exceed parameter $d$.}

It was shown in~\cite{michal} that random walks on $d$-regular explorable evolving graphs have a uniform stationary distribution, and bounds on the mixing and cover time were proved as well. Moreover, it was observed that those properties hold even if the graph is not regular and $d$ is only an upper bound on the degree.\footnote{Their analysis relies on Lemma 12, which bounds the eigenvalues of the transition matrix as long as it is stochastic, connected, symmetric, and non-zero entries lower bounded by $1/d$. Those conditions hold for all the transition matrices, even if the evolving graph is not regular.}

Thus, for the cases where $d$ is an upper bound on the number of neighboring nodes, we analyze the evolution of potentials within each phase leveraging previous work on random walks on evolving graphs. Specifically, we use the following result which is an extension of Corollary 14 in~\cite{michal}.
\begin{theorem}
\label{koucky}
(Corollary 14 in~\cite{michal}.)
After $t$ rounds of a $d_{\max}$-lazy random walk on an evolving graph with $n$ nodes, dynamic diameter $D$, upper bound on maximum degree $ d_{\max}$, and initial distribution $\vec{p}_0$, the following holds.
\begin{align*}
\left|\left|\vec{p}_t - \frac{\vec{I}}{n}\right|\right|_2^2 \leq \left(1-\frac{1}{ d_{\max}Dn}\right)^t\left|\left|\vec{p}_0 - \frac{\vec{I}}{n}\right|\right|_2^2
\end{align*}
\end{theorem}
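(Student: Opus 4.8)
\parhead{Proof plan}
The plan is to view each round of the $d_{\max}$-lazy random walk as left-multiplication of the (row) distribution vector by a fixed stochastic matrix, to show that each such matrix contracts the $L_2$-distance to the uniform distribution $\vec{I}/n$ by a factor $1-\frac{1}{d_{\max}Dn}$, and then to iterate over the $t$ rounds; squaring the resulting inequality gives the displayed bound. Since this is an extension of Corollary 14 in~\cite{michal}, whose only new feature is that the initial distribution $\vec{p}_0$ is arbitrary rather than a point mass, I will also check along the way that the original argument never uses the point-mass assumption, so the extension is immediate.

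First I would make the per-round setup precise. For each round $\tau\in\{1,\dots,t\}$ let $M_\tau$ be the transition matrix of the walk on the round-$\tau$ snapshot: $M_\tau[i][j]=1/d_{\max}$ when $i,j$ are neighbors in round $\tau$, $M_\tau[i][i]=1-\deg_\tau(i)/d_{\max}$, and $M_\tau[i][j]=0$ otherwise. Because $d_{\max}$ is an upper bound on the degree, every diagonal entry is nonnegative and every nonzero entry is at least $1/d_{\max}$; moreover $M_\tau$ is symmetric, hence doubly stochastic, its support graph is connected with diameter at most $D$, and $\vec{I}/n$ is a left fixed point of every $M_\tau$. Writing $\vec{p}_\tau=\vec{p}_{\tau-1}M_\tau$ and $\vec{e}_\tau:=\vec{p}_\tau-\vec{I}/n$, linearity gives $\vec{e}_\tau=\vec{e}_{\tau-1}M_\tau$, and since $\vec{p}_\tau$ and $\vec{I}/n$ are probability vectors, $\vec{e}_\tau$ is orthogonal to the all-ones vector $\vec{I}$.

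Next I would invoke the eigenvalue bound underlying Corollary 14 in~\cite{michal} (Lemma 12 there): for a stochastic, symmetric, connected matrix on $n$ nodes whose support graph has diameter at most $D$ and whose nonzero entries are at least $1/d_{\max}$, every eigenvalue other than the Perron eigenvalue $1$ has absolute value at most $1-\frac{1}{d_{\max}Dn}$ --- and, as recalled in the footnote above, this holds even when the snapshot is not regular, so it applies to each $M_\tau$. Since $M_\tau$ is symmetric, it is self-adjoint, and therefore on the invariant subspace orthogonal to $\vec{I}$ its operator norm equals the largest such absolute eigenvalue; hence $||\vec{e}_\tau||_2=||\vec{e}_{\tau-1}M_\tau||_2\le\left(1-\frac{1}{d_{\max}Dn}\right)||\vec{e}_{\tau-1}||_2$. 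Iterating from $\tau=t$ down to $\tau=1$ gives $||\vec{p}_t-\vec{I}/n||_2\le\left(1-\frac{1}{d_{\max}Dn}\right)^t||\vec{p}_0-\vec{I}/n||_2$, and squaring both sides yields the statement.

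The step I expect to be the main obstacle is justifying the per-round $L_2$ contraction for a \emph{time-varying} chain. This rests on two points: each $M_\tau$ must be self-adjoint with respect to the uniform measure, so that it is the second-largest-\emph{modulus} eigenvalue, not merely the second-largest eigenvalue, that governs the norm on the subspace orthogonal to $\vec{I}$ --- a distinction that matters here because the laziness $1/d_{\max}$ is too small to force $M_\tau$ to be positive semidefinite, so negative eigenvalues must also be controlled; and the eigenvalue bound must hold uniformly over all snapshots, which is where the dynamic diameter $D$ enters as a per-round diameter bound. Once these are in place the argument is exactly that of Corollary 14 in~\cite{michal}, and it nowhere uses that $\vec{p}_0$ is concentrated at a single node, which is precisely the claimed extension.
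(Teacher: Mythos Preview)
The paper does not prove this statement at all: it is quoted verbatim as Corollary~14 of~\cite{michal} and used as a black box, with only a footnote observing that the underlying eigenvalue bound (Lemma~12 in~\cite{michal}) requires merely that each transition matrix be stochastic, connected, symmetric, and have nonzero entries at least $1/d$, hence applies even without regularity. Your proposal is a correct reconstruction of the standard argument behind that corollary --- per-round $L_2$ contraction on $\vec{I}^\perp$ via the symmetric spectral bound, then iteration --- and your remarks about controlling negative eigenvalues and the irrelevance of the point-mass assumption are accurate; there is simply nothing in the present paper to compare it against.
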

%In our notation, $1/d_{\max}$ is the fraction of potential $1/d$ shared by each node, and $h$ is the dynamic diameter $D$.

In between phases the leader ``consumes'' its potential, effectively changing the distribution at that point. Then, a new phase starts. 

In \name, given that $d$ is a function of the estimate $k$, if the estimate is low there may be inputs for which $d$ is not an upper bound on the number of neighbors. We show in our analysis that in those cases the leader detects the error and after some time all nodes increase the estimate.

%For clarity, we analyze \name using $d=k^2$, which allows a clean separation of the cases $k^2\geq n$ and $k^2< n$. The running time obtained in this way is polynomial on $n$. Using $d=k$, the algorithm obtains the network size faster, but the analysis is more laborious. The details are included in Section~\ref{conclude}.

First, we prove correctness when $k=n$ as follows.

\begin{lemma}
\label{correct}
If $d\geq k$ and $k=n$, after running the \name protocol for $p\geq\frac{k}{1-1/k}\ln (k(k-1))$ phases, each of $r\geq4dk^2\ln k$ rounds, the potential $\rho$ consumed by the leader is $k-1-1/k \leq \rho \leq k-1$.
\end{lemma}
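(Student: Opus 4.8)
The plan is to reduce everything to the evolution of a single scalar, the aggregate potential at the start of each phase, and to use the fact that the only events that change it are the leader's consumptions. Write $\phi_i$ for $\|\vec\Phi\|_1$ at the beginning of phase $i$ (for $i=1,\dots,p$) and $\phi_{p+1}$ for the aggregate immediately after the leader consumes in phase $p$; initially $\phi_1 = n-1 = k-1$. I would first record two elementary facts. (i) The per‑round update of Lines~\ref{leaderAlg}.\ref{potupdate} and~\ref{otherAlg}.\ref{newpot} conserves the aggregate: summing $\Phi'_v = \Phi_v + \tfrac1d\sum_{u\in N_v}\Phi_u - \tfrac{|N_v|}{d}\Phi_v$ over all $v$ and using link symmetry gives $\sum_v\sum_{u\in N_v}\Phi_u = \sum_u |N_u|\Phi_u$, so the inflow and outflow terms cancel. (ii) Since $d\ge k=n$, every node always has at most $n-1\le d-1$ neighbors, so the normal update is executed every round, the ``too many messages'' alarm (case~2) and its propagation (case~3) never trigger, and each $\Phi'_v = (1-\tfrac{|N_v|}{d})\Phi_v + \tfrac1d\sum_{u\in N_v}\Phi_u$ is a nonnegative combination of nonnegative quantities, so all potentials stay in $[0,\infty)$. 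Hence $\phi_i$ changes only between phases, $\phi_{i+1} = \phi_i - \Phi^{(i)}_{r+1}[0]$ with $\Phi^{(i)}_{r+1}[0]\ge 0$ the leader's consumed potential, and therefore
\[
\rho = \sum_{i=1}^{p}\Phi^{(i)}_{r+1}[0] = \phi_1 - \phi_{p+1} = (k-1) - \phi_{p+1}.
\]
Since $\phi_{p+1}\ge 0$ this already gives $\rho\le k-1$; it remains to prove $\phi_{p+1}\le 1/k$.

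Fix a phase $i$ with $\phi_i>0$ (if ever $\phi_i=0$ then $\rho=k-1$ and we are done). As observed just before Theorem~\ref{koucky}, the $r$ rounds of updates in phase $i$ are exactly $r$ steps of a $d$‑lazy random walk on the evolving network started from the distribution $\vec p_i = \vec\Phi_i/\phi_i$, and since $d\ge n$ is a legitimate upper bound on the dynamic maximum degree, Theorem~\ref{koucky} applies with $d_{\max}=d$. Using only the crude estimate $\|\vec p_i - \vec I/n\|_2^2 = \|\vec p_i\|_2^2 - 1/n \le \|\vec p_i\|_1^2 - 1/n = 1 - 1/n < 1$, Theorem~\ref{koucky} yields
\[
\Big\|\vec p^{(i)}_{r+1} - \tfrac{\vec I}{n}\Big\|_2^2 \le \Big(1 - \tfrac1{dDn}\Big)^{r}.
\]
The dynamic diameter satisfies $D\le n-1 < n = k$, so $dDn < dk^2$ and hence $r \ge 4dk^2\ln k \ge 4dDn\ln n$; consequently $\big(1-\tfrac1{dDn}\big)^{r/2} \le e^{-r/(2dDn)} \le e^{-2\ln n} = 1/n^2$, and, bounding a single coordinate by the $L_2$ norm,
\[
p^{(i)}_{r+1}[0] \ge \tfrac1n - \Big\|\vec p^{(i)}_{r+1} - \tfrac{\vec I}{n}\Big\|_2 \ge \tfrac1n - \tfrac1{n^2} = \tfrac{1-1/k}{k}.
\]
(At the end of phase~1 every node's potential is then at most $\phi_1(\tfrac1n+\tfrac1{n^2}) = \tfrac{n^2-1}{n^2} < 1$, hence below the threshold $\tau$ fixed in Theorem~\ref{thm}, so the case‑1 alarm does not trigger either, which completes the claim in (ii) that no alarm is ever raised.)

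Iterating $\phi_{i+1} = \phi_i\big(1 - p^{(i)}_{r+1}[0]\big) \le \phi_i\big(1 - \tfrac{1-1/k}{k}\big)$ over all $p$ phases and using $1-x\le e^{-x}$,
\[
\phi_{p+1} \le (k-1)\Big(1 - \tfrac{1-1/k}{k}\Big)^{p} \le (k-1)\,e^{-\frac{1-1/k}{k}p} \le (k-1)\,e^{-\ln(k(k-1))} = \tfrac1k,
\]
the last inequality using $p \ge \tfrac{k}{1-1/k}\ln(k(k-1))$. Combined with $\rho = (k-1) - \phi_{p+1}$ this gives $k-1-\tfrac1k \le \rho \le k-1$.

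\textbf{Expected main obstacle.} The arithmetic is routine; the care is in the per‑phase reduction. One must (a) justify renormalizing the potential vector to a probability distribution at the top of every phase and check that the worst‑case bound $\|\vec p_i - \vec I/n\|_2^2 \le 1$ survives the intervening leader consumption (it does, since every probability vector satisfies it), (b) observe that the hypothesis $d\ge k=n$ is precisely what makes $d$ a legitimate degree bound — so Theorem~\ref{koucky} may be invoked — and simultaneously what rules out every alarm, and (c) track the constants so that the slightly off‑uniform target $\tfrac{1-1/k}{k}$ for the leader's stationary share (rather than the naive $\tfrac1n$), together with $D\le n-1$ and $-\ln(1-x)\ge x$, produces exactly the stated thresholds $r\ge 4dk^2\ln k$ and $p\ge\tfrac{k}{1-1/k}\ln(k(k-1))$. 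A secondary point is verifying that the evolving‑graph hypotheses of Theorem~\ref{koucky} hold under $1$‑interval connectivity and for non‑regular topologies, but the paper's footnote on Lemma~12 of~\cite{michal} already records this.
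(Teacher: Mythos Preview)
Your proof is correct and follows essentially the same approach as the paper's: invoke Theorem~\ref{koucky} within each phase (using the crude bound $\|\vec p_i-\vec I/n\|_2^2\le 1$ and $D<n=k$) to get $p^{(i)}_{r+1}[0]\ge \tfrac1k-\tfrac1{k^2}$, then iterate the contraction $\phi_{i+1}\le\phi_i\big(1-\tfrac{1-1/k}{k}\big)$ over $p$ phases to obtain $\phi_{p+1}\le 1/k$. Your write-up is in fact more careful than the paper's in explicitly verifying that no alarm fires; the one slip is the parenthetical ``$\tfrac{n^2-1}{n^2}<1$, hence below the threshold $\tau$'', which does not follow as stated (you need $1-\tfrac1{n^2}\le 1-\tfrac1{n^{1+\epsilon}}$, i.e.\ $\epsilon\le 1$, which holds for the choice $\epsilon=\log_k 2$ in the Corollary but not in general) --- the paper's own proof of this lemma simply does not address the threshold alarm at all.
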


\begin{proof}
The second inequality is immediate because the initial total potential in the network is $n-1$ and it does not increase during the execution. So, if $k=n$, the potential consumed by the leader cannot be more than $k-1$.

For the first inequality, consider the vector of potentials $\vec{\Phi}_1$ at the beginning of round $1$ of any phase $i$.
As explained above, we analyze the evolution of potentials within phase $i$ as a random walk on the evolving graph underlying the network topology.
%
%The evolution of the algorithm within a phase can be viewed as a random walk, where the initial vector of potentials is equivalent to an 
Consider the initial distribution $\vec{p}_i$ on the overall potential $||\vec{\Phi}_1||_1$. 
%For instance, the initial vector of potentials $\vec{\Phi}_0=\langle0,1,1,\dots\rangle$, corresponds to a distribution $\vec{p}_0=\langle 0,1/(k-1),1/(k-1),\dots\rangle$ on the initial $||\vec{\Phi}_0||_1=k-1$.
%
Then, using Theorem~\ref{koucky}, we know that after a phase $i$ of $r\geq4dk^2\ln k$ rounds the distribution is such that 
\begin{align}
\left|\left|\vec{p}_{r+1} - \frac{\vec{I}}{k}\right|\right|_2^2 
&\leq \left(1-\frac{1}{d{\cal D}k}\right)^r\left|\left|\vec{p}_1 - \frac{\vec{I}}{k}\right|\right|_2^2\label{distance}\\
&\leq \exp\left(-\frac{r}{d{\cal D}k}\right)\nonumber\\
&\leq \exp\left(-\frac{4dk^2\ln k}{d{\cal D}k}\right),\textrm{ given that $k=n>{\cal D}$,}\nonumber\\
&\leq \exp\left(-4\ln k\right)\nonumber\\
&= \frac{1}{k^4}.\nonumber
\end{align}
Given that $(p_{r+1}[0] - 1/k)^2 \leq \left|\left|\vec{p}_{r+1} - \frac{\vec{I}}{k}\right|\right|_2^2$, we have that $(p_{r+1}[0]-1/k)^2 \leq 1/k^4$ and hence $p_{r+1}[0] \geq 1/k - 1/k^2$.
Notice that the latter is true for any initial distribution, as the distance to uniform in Equation~\ref{distance} has been upper bounded by $1$.
Thus, applying recursively we have that after $p\geq\frac{k}{1-1/k}\ln (k(k-1))$ phases it is
\begin{align*}
||\vec{\Phi}_p||_1 
&\leq \left(1-\frac{1}{k}\left(1-\frac{1}{k}\right)\right)^p (k-1)\\
&\leq \exp\left(-\frac{p}{k}\left(1-\frac{1}{k}\right)\right) (k-1)\\
&\leq  1/k.
\end{align*}
Thus, the claim follows.
\hfill$\square$
\end{proof}

%%%%%%%%%%%%%%%%%%%%%%%%%%%%%%%%%%%%%%%%%%%%%%%%%%%%%%%%%%%%%%%%%%%%%%%%%%%

The previous lemma shows that if $\rho>k-1$ or $\rho<k-1-1/k$ we know that the estimate $k$ is wrong, but the complementary case, that is, $k-1-1/k \leq \rho \leq k-1$, may occur even if the estimate is $k<n$ and hence the error has to be detected by other means. To prove correctness in that case, we show first that if $k<n\leq k^{1+\epsilon}$ for some $\epsilon>0$ the leader must consume $\rho>k-1$ potential if the protocol is run long enough. To ensure that $d\geq  \Delta+1$, we restrict $d\geq k^{1+\epsilon}$.

\begin{lemma}
\label{ksquare}
If $1<k<n\leq k^{1+\epsilon}\leq d$, $\epsilon>0$, after running the \name protocol for $p\geq\frac{(2+\epsilon)k^{1+\epsilon}}{1-1/k}\ln k$ phases, each of $r\geq(4+2\epsilon)dk^{2+2\epsilon}\ln k$ rounds, the potential $\rho$ consumed by the leader is $\rho > k-1$.
\end{lemma}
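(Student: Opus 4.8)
The plan is to run the argument of Lemma~\ref{correct} almost verbatim, with the true size $n$ (which here lies in $(k,k^{1+\epsilon}]$) playing the role $k$ played there, while squeezing the per-phase contraction tightly enough to match the stated bounds on $p$ and $r$. First I would check that in this regime the protocol runs ``normally'', so that the random-walk analysis applies. Since $\Delta \le n-1 < n \le k^{1+\epsilon} \le d$, every node always receives at most $d-1$ messages, so the ``too many messages'' alarm (Lines~\ref{leaderAlg}.\ref{leadertoomany}, \ref{otherAlg}.\ref{othertoomany}) is never raised and hence never propagated; and the phase-$1$ threshold alarm does not fire in this regime either (the ``unalarmed'' side of the analysis, governed by the value of $\tau$ fixed in Theorem~\ref{thm}). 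Consequently the leader's status stays normal and it consumes its potential at the end of every phase, the potential update is exactly one step of a $d$-lazy random walk on the evolving topology, the total potential is conserved within a phase (sum the update over all nodes and use link symmetry), and between phases it decreases by exactly the amount the leader moves into $\rho$.

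Next I would bound the contraction over a single phase. Fixing a phase with total potential $M$ at its start and induced initial distribution $\vec p_1 = \vec\Phi_1/M$, I would apply Theorem~\ref{koucky} as in Lemma~\ref{correct} — a walk on $n$ nodes, degree bound $d\ge\Delta+1$, chronopath $\mathcal D\le n-1$ — and use $\mathcal D\,n < n^2 \le k^{2+2\epsilon}$ together with $r\ge (4+2\epsilon)dk^{2+2\epsilon}\ln k$ to get $\bigl\|\vec p_{r+1} - \vec I/n\bigr\|_2^2 \le \exp\bigl(-(4+2\epsilon)\ln k\bigr) = k^{-(4+2\epsilon)}$. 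Since $(p_{r+1}[0]-1/n)^2$ is at most this quantity and $1/n \ge k^{-(1+\epsilon)}$, I would conclude
\begin{align*}
p_{r+1}[0] \;\ge\; \frac1n - k^{-(2+\epsilon)} \;\ge\; \frac{1}{k^{1+\epsilon}} - \frac1k\cdot\frac{1}{k^{1+\epsilon}} \;=\; \frac{1}{k^{1+\epsilon}}\Bigl(1-\frac1k\Bigr),
\end{align*}
valid for \emph{every} initial distribution because the distance to uniform was bounded by $1$; the same inequality controls the individual potentials $\Phi_{r+1}[i]$ after phase $1$, which is what one checks against $\tau$ for the threshold alarm cited above. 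Hence each phase multiplies the total potential by at most $1 - \frac{1}{k^{1+\epsilon}}\bigl(1-\frac1k\bigr)$.

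Finally, since the initial total potential is $n-1 < k^{1+\epsilon}$, after $p$ phases the potential remaining in the network is at most $k^{1+\epsilon}\exp\bigl(-\tfrac{p}{k^{1+\epsilon}}(1-\tfrac1k)\bigr) \le k^{1+\epsilon}k^{-(2+\epsilon)} = 1/k$, using $p\ge\frac{(2+\epsilon)k^{1+\epsilon}}{1-1/k}\ln k$. Because the total potential is conserved except for what the leader consumes, $\rho = (n-1) - (\text{remaining}) > (n-1) - \frac1k \ge k - \frac1k > k-1$, using $n\ge k+1$, which is the claim.

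The step I expect to be the crux is the per-phase contraction. The ``naive'' precision $k^{-2}$ on $|p_{r+1}[0]-1/n|$ — what a Lemma~\ref{correct}-style computation with $4\ln k$ in the exponent gives — only yields $p_{r+1}[0]\ge\frac1n-\frac1{k^2}$, which for $\epsilon<1$ is \emph{not} at least $\frac{1}{k^{1+\epsilon}}\bigl(1-\frac1k\bigr)$; one genuinely needs precision $k^{-(2+\epsilon)}$, and making the bookkeeping line up is exactly why $r$ carries the factor $4+2\epsilon$ and the exponent $2+2\epsilon$ on $k$ (the latter to absorb $n^2\le k^{2+2\epsilon}$) and why $p$ carries the factor $2+\epsilon$. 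The only other delicate point is the ``unalarmed'' claim of the first step, where the specific $\tau$ from Theorem~\ref{thm} is used.
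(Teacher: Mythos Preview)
Your per-phase contraction and its iteration match the paper's proof essentially line for line; the only cosmetic difference is that you simplify $1/n - k^{-(2+\epsilon)} \ge \frac{1}{k^{1+\epsilon}}(1-\tfrac1k)$ up front, while the paper carries $1/n - 1/k^{2+\epsilon}$ into the exponential and simplifies there.

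Where you diverge from the paper is the first paragraph: you try to argue that the threshold alarm never fires in this regime, so that normal operation is guaranteed throughout. The justification you sketch does not work. From $|p_{r+1}[i]-1/n|\le k^{-(2+\epsilon)}$ one only gets $\Phi_{r+1}[i]\le (n-1)\bigl(1/n + k^{-(2+\epsilon)}\bigr) = 1-1/n + (n-1)/k^{2+\epsilon}$, and when $n$ is near $k^{1+\epsilon}$ this upper bound can exceed $\tau = 1-1/k^{1+\epsilon}$ (at $n=k^{1+\epsilon}$ it equals $\tau + (k^{1+\epsilon}-1)/k^{2+\epsilon}>\tau$), so ``the same inequality'' does not rule out the alarm. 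The paper does not attempt this verification at all: it proves the lemma tacitly under normal operation and, in the proof of Theorem~\ref{thm}, observes separately that if the leader \emph{does} receive an alarm then its status is not \emph{normal} at epoch end and it therefore cannot terminate regardless of $\rho$. If you drop the alarm claim and read the lemma the way the paper uses it, your argument coincides with the paper's.
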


\begin{proof}
Given that $d\geq n$, we can use Theorem~\ref{koucky} as in Lemma~\ref{correct} to show that after a phase $i$ of $r\geq(4+2\epsilon)dk^{2+2\epsilon}\ln k$ rounds the distribution is such that 
\begin{align}
\left|\left|\vec{p}_{r+1} - \frac{\vec{I}}{n}\right|\right|_2^2 
&\leq \left(1-\frac{1}{d{\cal D}n}\right)^r\left|\left|\vec{p}_1 - \frac{\vec{I}}{n}\right|\right|_2^2\nonumber\\
&\leq \exp\left(-\frac{r}{d{\cal D}n}\right)\nonumber\\
&\leq \exp\left(-\frac{(4+2\epsilon)dk^{2+2\epsilon}\ln k}{d{\cal D}n}\right) \ ,\textrm{ given that $k^{1+\epsilon}\geq n > {\cal D}$,}\nonumber\\
&\leq \exp\left(-(4+2\epsilon)\ln k\right)\nonumber\\
&= \frac{1}{k^{4+2\epsilon}} \ .\nonumber
\end{align}

Given that $(p_{r+1}[0] - 1/n)^2 \leq \left|\left|\vec{p}_{r+1} - \frac{\vec{I}}{n}\right|\right|_2^2$, we have that $(p_{r+1}[0]-1/n)^2 \leq 1/k^{4+2\epsilon}$ and hence $p_{r+1}[0] \geq 1/n - 1/k^{2+\epsilon}$.
The latter is true for any initial distribution, as the distance to uniform has been upper bounded by $1$.
So, applying recursively, we have that after $p\geq\frac{(2+\epsilon)k^{1+\epsilon}}{1-1/k}\ln k$ phases it is
\begin{align*}
||\vec{\Phi}_p||_1 
&\leq \left(1-\left(\frac{1}{n}-\frac{1}{k^{2+\epsilon}}\right)\right)^p (n-1)\\
&\leq \exp\left(-p\left(\frac{1}{n}-\frac{1}{k^{2+\epsilon}}\right)\right) (n-1) \ , \textrm{ since $k^{1+\epsilon}\geq n$,}\\
&\leq \exp\left(-\frac{p}{k^{1+\epsilon}}\left(1-\frac{1}{k}\right)\right) (n-1) \ , \textrm{ replacing $p$,}\\
%&\leq \exp\left(-\frac{\frac{(2+\epsilon)k^{1+\epsilon}}{1-1/k}\ln k}{k^{1+\epsilon}}\left(1-\frac{1}{k}\right)\right) (n-1), \\
%&= \exp\left(-\frac{(2+\epsilon)}{1-1/k}\ln k\left(1-\frac{1}{k}\right)\right) (n-1), \\
%&= \exp\left(-(2+\epsilon)\ln k\right) (n-1), \\
&\leq \frac{n-1}{k^{2+\epsilon}} \ , \textrm{ given that $k^{1+\epsilon}>n-1$,}\\
&<  1/k \ .
\end{align*}
Thus, the potential consumed by the leader is $\rho\geq n-1-1/k > k-1$ for any integers $n>k>1$.
\hfill$\square$
\end{proof}

%%%%%%%%%%%%%%%%%%%%%%%%%%%%%%%%%%%%%%%%%%%%%%%%%%%%%%%%%%%%%%%%%%%%%%%%%%%

It remains to show that even if $n>k^{1+\epsilon}$ \name still detects that the estimate is low. 
%We focus on the first phase. We define a threshold $\tau$ such that, after the phase is completed, all nodes that have potential above $\tau$ can send an alarm to the leader, as such potential indicates that the estimate is low. We show that the alarm must be received after $k^{1+\epsilon}$ further rounds of communication. 
First, we prove the following two claims that establish properties of the potential during the execution of \name.
%First, we show that if the number of neighbors of each node is less than $d$ in every round, the overall potential is neither lost nor created. 
(Recall that we use round $r+1$ to refer to potentials at the end of the phase right before the leader consumes its potential in Line~\ref{leaderAlg}.\ref{reset}.)

\begin{claim}
\label{conservation}
Given an \ADN of $n$ nodes running \name with parameter $d$, for any round $t$ of the first phase, such that $1\leq t\leq r+1$, if $d$ was larger than the number of neighbors of each node $x$ for every round $t'<t$, then $||\vec{\Phi}_t||_1=n-1$. 
\end{claim}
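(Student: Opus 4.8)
The plan is to prove the claim by induction on $t$, the key point being that under the hypothesis $d>$(number of neighbors of every node) the potential update of each round only \emph{redistributes} potential, so the aggregate is preserved. I would carry the induction with two invariants maintained jointly for every round $t$ with $1\le t\le r+1$: (a) every node is in \emph{normal} status at the start of round $t$ and, if $t\le r$, executes the potential-update line (Lines~\ref{leaderAlg}.\ref{potupdate} and~\ref{otherAlg}.\ref{newpot}) in round $t$; (b) $\vec{\Phi}_t$ is non-negative coordinate-wise and $\sum_{i}\Phi_t[i]=n-1$. Invariant (b) immediately yields $||\vec{\Phi}_t||_1=\sum_i\Phi_t[i]=n-1$, which is the statement.

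For the base case $t=1$ nothing needs to be assumed about $d$: by the initialization the leader holds potential $0$ and each of the other $n-1$ nodes holds potential $1$, all non-negative and summing to $n-1$; and at the start of the first phase of the epoch every node has status $normal$. For invariant (a) in the inductive step I would argue that a node can leave the $normal$ status only through one of the three alarm triggers: the end-of-first-phase threshold test, which is checked only \emph{after} round $r$ and hence never during rounds $1,\dots,t-1\le r$; receiving more than $d-1$ messages, which cannot happen in any round $t'<t$ since by hypothesis $d$ exceeds the number of neighbors of every node in those rounds, i.e. $|N_x|\le d-1$; or hearing an alarm from a neighbor, which cannot happen because, by the first two points, no node has raised an alarm in any round $<t$. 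Hence all statuses remain $normal$, and the guard of the update lines holds for every node in every round $1,\dots,t-1$.

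Given (a), the update applied in round $t-1$ reads $\Phi_{t}[i]=\Phi_{t-1}[i]+\tfrac1d\sum_{j\in N_i}\Phi_{t-1}[j]-\tfrac{|N_i|}{d}\Phi_{t-1}[i]$, where $N_i$ is the round-$(t-1)$ neighborhood of $i$. Non-negativity follows by rewriting this as $\bigl(1-\tfrac{|N_i|}{d}\bigr)\Phi_{t-1}[i]+\tfrac1d\sum_{j\in N_i}\Phi_{t-1}[j]$: the coefficient $1-|N_i|/d$ is positive because $|N_i|<d$, and every term is non-negative by the inductive hypothesis. Conservation follows from link symmetry: summing over all $i$, the quantity $\sum_i\sum_{j\in N_i}\Phi_{t-1}[j]$ equals $\sum_j |N_j|\,\Phi_{t-1}[j]$ since $j\in N_i\iff i\in N_j$, so it cancels exactly against $\sum_i |N_i|\,\Phi_{t-1}[i]$; hence $\sum_i\Phi_t[i]=\sum_i\Phi_{t-1}[i]=n-1$. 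This closes the induction and proves the claim.

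This is essentially a double-counting argument, so I do not expect a genuine technical obstacle. The two points I would be most careful about are the bookkeeping of the three alarm conditions — so that one is entitled to conclude that \emph{every} node applies the update rule in \emph{every} round up to $t-1$ — and tracking the coordinate-wise non-negativity of $\vec{\Phi}_t$, which is precisely what justifies replacing the $L_1$ norm by the plain algebraic sum of potentials.
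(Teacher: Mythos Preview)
Your proof is correct and follows essentially the same approach as the paper: induction on $t$, with the inductive step reduced to a double-counting/symmetry argument showing that the round update merely redistributes potential. The paper's version is terser---it simply writes $||\vec{\Phi}_{t+1}||_1 = ||\vec{\Phi}_t||_1 + \tfrac1d\sum_x(\cdots)$ and cancels via link symmetry---whereas you additionally carry non-negativity as an explicit invariant (to justify $||\vec{\Phi}_t||_1=\sum_i\Phi_t[i]$) and spell out the alarm bookkeeping; these are welcome clarifications but not a different argument.
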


\begin{proof}
For the first round the claim holds as the initial potential of each node is $1$ except the leader that gets $0$. That is, $||\vec{\Phi}_1||_1 = n-1$.
For any given round $1< t\leq r+1$ in phase $1$ and any given node $x$, if $d$ is larger than the number of neighbors of $x$, the potential is updated only in Lines~\ref{leaderAlg}.\ref{potupdate} and~\ref{otherAlg}.\ref{newpot} as
\begin{align*}
\Phi_{t+1}[x] &= \Phi_{t}[x] + \sum_{i\in N_{t}[x]}\Phi_{t}[i]/d - |N_{t}[x]|\Phi_{t}[x]/d
\ .
\end{align*}
Where 
%$\Phi_{i}[j]$ is the potential of node $j$ at the beginning of round $i$, and 
$N_{t}[x]$ is the set of neighbors of node $x$ in round $t$.
Inductively, assume that the claim holds for some round $1\leq t\leq r$. 
We want to show that consequently it holds for $t+1$.
The potential for round $t+1$ is
\begin{align}
||\vec{\Phi}_{t+1}||_1 &= ||\vec{\Phi}_{t}||_1 + \frac{1}{d}\sum_{x\in V} \left( \sum_{y\in N_{t}[x]}\Phi_{t}[y] - |N_{t}[x]|\Phi_{t}[x] \right) \ .\label{potvecupdate}
\end{align}

In the \ADN model, communication is symmetric. That is, for every pair of nodes $x,y\in V$ and round $t$, it is $x\in N_{t}[y] \iff y\in N_{t}[x]$. 
Fix a pair of nodes $x',y' \in V$ such that in round $t$ it is $y'\in N_{t}[x']$ and hence $x'\in N_{t}[y']$. 
Consider the summations in Equation~\ref{potvecupdate}.
Due to symmetric communication, we have that the potential $\Phi_{t}[y']$ appears with positive sign when the indeces of the summations are $x=x'$ and $y=y'$, and with negative sign when the indices are $x=y'$ and $y=x'$. This observation applies to all pairs of nodes that communicate in any round $t$. 
Therefore, we can re-write Equation~\ref{potvecupdate} as
\begin{align*}
||\vec{\Phi}_{t+1}||_1 &= ||\vec{\Phi}_{t}||_1 + \frac{1}{d}\sum_{\substack{x,y\in V:\\y\in N_{t}[x]\\}} \bigg(\Phi_{t}[y] - \Phi_{t}[x] + \Phi_{t}[x] - \Phi_{t}[y] \bigg) = ||\vec{\Phi}_{t}||_1 \ .
\end{align*}
Thus, the claim follows.

\hfill$\square$
\end{proof}

\begin{claim}
\label{potbounds}
Given an \ADN of $n$ nodes running \name, for any round $t$ of any phase and any node $x$, it is $0\leq \Phi_t[x]\leq 1$.  
\end{claim}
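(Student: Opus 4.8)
The plan is a routine induction over rounds, resting on a single structural observation: the ``normal'' potential update is, from the point of view of each node, a convex combination of potentials. Concretely, whenever the update of Lines~\ref{leaderAlg}.\ref{potupdate} and~\ref{otherAlg}.\ref{newpot} is actually executed for a node $x$ in round $t$ --- that is, when $status=normal$, $|N_t[x]|\le d-1$, and every neighbor is also in $status=normal$ --- we can rewrite it as
\[
\Phi_{t+1}[x] \;=\; \left(1-\frac{|N_t[x]|}{d}\right)\Phi_t[x] \;+\; \frac{1}{d}\sum_{i\in N_t[x]}\Phi_t[i] .
\]
Because the guard guarantees $|N_t[x]|\le d-1$, the coefficient $1-|N_t[x]|/d$ is at least $1/d>0$, each remaining coefficient $1/d$ is positive, and all coefficients sum to $\bigl(1-|N_t[x]|/d\bigr)+|N_t[x]|\cdot(1/d)=1$. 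Hence $\Phi_{t+1}[x]$ lies between the minimum and the maximum of the values $\{\Phi_t[x]\}\cup\{\Phi_t[i]:i\in N_t[x]\}$; in particular, if all of those values lie in $[0,1]$, so does $\Phi_{t+1}[x]$.

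With this in hand I would run the induction on the global round counter across all epochs and phases. For the base case, at the very first round the leader has $\Phi=0$ and every other node has $\Phi=1$ (the initializations in Figures~\ref{leaderAlg} and~\ref{otherAlg}), so the invariant $0\le\Phi\le 1$ holds. For the inductive step, I would enumerate every place at which a potential variable is written between two consecutive rounds and verify the invariant is preserved in each: (i) the normal update above --- handled by the convex-combination observation together with the inductive hypothesis; (ii) any of the ``$k$ is wrong''/alarm branches, namely receiving more than $d-1$ messages, receiving or already being in an alarm (Lines~\ref{leaderAlg}.\ref{leadertoomany} and~\ref{otherAlg}.\ref{othertoomany}), or the end-of-phase-$1$ test $\Phi>\tau$ (Lines~\ref{leaderAlg}.\ref{leaderthreshold} and~\ref{otherAlg}.\ref{otherthreshold}), all of which set $\Phi\gets 1\in[0,1]$; (iii) the leader consuming its potential at the end of a phase and the end-of-epoch reset, which set $\Phi\gets 0\in[0,1]$ (Lines~\ref{leaderAlg}.\ref{phireset} and~\ref{leaderAlg}.\ref{reset}); and (iv) the termination-dissemination rounds, which broadcast only $status$ and do not touch $\Phi$. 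In every case $0\le\Phi\le 1$ is maintained, so it holds at the start of every round of every phase of every epoch.

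The only point that needs care is making the case analysis exhaustive --- i.e., confirming that the four situations above are genuinely the only lines at which a $\Phi$ variable is assigned --- and noting that the convex-combination update is never applied when $|N_t[x]|\ge d$, since that case is caught by the guard and routed to the alarm branch. There is no substantive obstacle here: the claim is essentially the statement that the per-node update is (row-)stochastic with all its mass kept inside $[0,1]$, which is the same ``laziness'' that underlies the random-walk viewpoint of Theorem~\ref{koucky} and is complementary to the column-stochasticity exploited in Claim~\ref{conservation}.
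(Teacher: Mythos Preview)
Your proposal is correct and follows essentially the same approach as the paper's proof: induction on rounds, with the key step being that the normal update keeps $\Phi$ in $[0,1]$ whenever the guard $|N_t[x]|\le d-1$ holds, and that all other assignments set $\Phi$ to $0$ or $1$. Your convex-combination phrasing is equivalent to (and arguably cleaner than) the paper's separate direct bounding of the lower and upper inequalities, and your case enumeration is slightly more explicit in covering the leader's end-of-phase and end-of-epoch resets to~$0$.
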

\begin{proof}
If $t=1$ the potential of the leader is $\Phi_1[0]=0$ and the potential of any non-leader node $x$ is $\Phi_1[x]=1$. Thus, the claim follows. 
Inductively, for any round $2<t\leq r+1$, we consider two cases according to node status. 
If a node $x$ is in alarm status at the beginning of round $t$, then it is $\Phi_t[x]=1$ as, whenever the status of a node is updated to alarm, its potential is set to $1$ and will not change until the next epoch (cf. Figures~\ref{leaderAlg} and~\ref{otherAlg}).
On the other hand, if a node $x$ is in normal status at the beginning of round $t$, it had its potential updated in all rounds $t'<t$ only in Lines~\ref{leaderAlg}.\ref{potupdate} and~\ref{otherAlg}.\ref{newpot} as
\begin{align*}
\Phi_{t'+1}[x] &= \Phi_{t'}[x] + \sum_{y\in N_{t'}[x]}\Phi_{t'}[y]/d - |N_{t'}[x]|\Phi_{t'}[x]/d.
\end{align*}
For all rounds $t'<t$, node $x$ exchanged potential with less than $d$ neighbors, because otherwise it would have been changed to alarm status in Lines~\ref{leaderAlg}.\ref{alarminsecondleader} and~\ref{otherAlg}.\ref{alarminsecondother}.
Therefore it is $|N_{t'}[x]|\Phi_{t'}[x]/d < \Phi_{t'}[x]$ which implies $\Phi_t[x]\geq 0$. 
It can also be seen that $\Phi_t[x]\leq 1$ because, for any $t'<t$, it is
\begin{align*}
\Phi_{t'+1}[x] &= \Phi_{t'}[x] + \sum_{y\in N_{t'}[x]}\Phi_{t'}[y]/d - |N_{t'}[x]|\Phi_{t'}[x]/d\\
&\leq \Phi_{t'}[x] + \frac{|N_{t'}[x]|}{d} - \frac{|N_{t'}[x]|}{d}\Phi_{t'}[x]\\
&= \Phi_{t'}[x] + \frac{|N_{t'}[x]|}{d}(1 - \Phi_{t'}[x])\\
&\leq \Phi_{t'}[x] + 1 - \Phi_{t'}[x] =1.
\end{align*}
\hfill$\square$
\end{proof}

It remains to show that even if $n>k^{1+\epsilon}$ \name still detects that the estimate is low. 
We focus on the first phase. We define a threshold $\tau$ such that, after the phase is completed, all nodes that have potential above $\tau$ can send an alarm to the leader, as such potential indicates that the estimate is low. We show that the alarm must be received after $k^{1+\epsilon}$ further rounds of communication. 

\begin{lemma}
\label{unalarmed}
For $\epsilon>0$,
after running the first phase of the \name protocol, 
there are at most $k^{1+\epsilon}$ nodes that 
have potential at most $\tau=1-1/k^{1+\epsilon}$.
\end{lemma}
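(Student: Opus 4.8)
The plan is to derive the bound from a simple double-counting of the total potential, using the two structural facts already in place: after the first phase the potentials still sum to $n-1$, and no individual potential ever exceeds $1$. Concretely, I would look at the vector $\vec{\Phi}_{r+1}$ of potentials at the end of the first phase (round $r+1$ in the analysis notation) and observe that if too many nodes had potential at most $\tau=1-1/k^{1+\epsilon}$, the aggregate potential would have to drop strictly below $n-1$, contradicting conservation.

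The steps, in order. First, invoke Claim~\ref{conservation} with $t=r+1$ to get $\sum_i \Phi_{r+1}[i]=n-1$, and Claim~\ref{potbounds} to get $0\le\Phi_{r+1}[i]\le 1$ for every node $i$. Second, let $L=\{\,i:\Phi_{r+1}[i]\le\tau\,\}$ and $\ell=|L|$; bounding every node of $L$ by $\tau$ and every remaining node by $1$ gives $n-1=\sum_i\Phi_{r+1}[i]\le \ell\tau+(n-\ell)=n-\ell/k^{1+\epsilon}$. Third, rearrange to $\ell/k^{1+\epsilon}\le 1$, i.e.\ $\ell\le k^{1+\epsilon}$, which is exactly the claim. (As a byproduct, when $n>k^{1+\epsilon}$ this forces at least $n-k^{1+\epsilon}\ge 1$ nodes to finish the first phase with potential above $\tau$, so the threshold alarm of Lines~\ref{leaderAlg}.\ref{leaderthreshold} and~\ref{otherAlg}.\ref{otherthreshold} is triggered somewhere, which is what the subsequent lemma needs.)

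The only delicate point is the legitimacy of invoking Claim~\ref{conservation}: its hypothesis ($d$ strictly exceeds every node's degree in every round of the first phase) may fail. But it fails precisely when some node received at least $d$ messages in some round of the first phase, and in that event that node has already raised an alarm (Lines~\ref{leaderAlg}.\ref{leadertoomany} and~\ref{otherAlg}.\ref{othertoomany}); detection of the wrong estimate then proceeds through that alarm rather than through the threshold test, so for the present lemma we may assume the hypothesis of Claim~\ref{conservation} holds. Finally, if $n\le k^{1+\epsilon}$ the statement is vacuous, since the network has only $n$ nodes in total, so all content is in the regime $n>k^{1+\epsilon}$, where the counting argument above applies. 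I expect this bookkeeping about when conservation is legitimately available to be the main (and essentially the only) obstacle; the arithmetic itself is a one-liner.
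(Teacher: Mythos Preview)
Your core counting argument---bound every node in $L$ by $\tau$, the rest by $1$, and compare to the total $n-1$---is exactly the paper's, just phrased in terms of potential rather than the complementary \emph{slack} $s_t[x]=1-\Phi_t[x]$ the paper uses; the arithmetic is identical.

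Where you diverge is the handling of the hypothesis of Claim~\ref{conservation}. The lemma as stated carries no degree assumption, and the paper does \emph{not} sidestep the case where some node sees $\ge d$ neighbors during the first phase: it proves the conclusion in that case too. The argument is a monotonicity comparison with the conservative (case~1) dynamics. When a node goes to alarm its potential is reset to $1$ (slack $0$), and the potential it already broadcast to its $d'>d-1$ neighbors is at least what it would have shared under the normal update; neighbors that pick up the alarm likewise jump to potential $1$. Every deviation from the conservative run therefore pushes slack down, giving $\sum_{x\in L}s_{r+1}[x]\le \|\vec s_{r+1}^{(2)}\|_1\le \|\vec s_{r+1}^{(1)}\|_1=1$, and $|L|\le k^{1+\epsilon}$ follows as before.

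Your proposed workaround---declare the degree-bounded case the only relevant one and let the degree alarm absorb the rest---is a legitimate restructuring of the overall argument (the proof of Theorem~\ref{thm} already separates these cases), but it is not a proof of Lemma~\ref{unalarmed} as written. If you want to keep the lemma's hypothesis-free statement, you need the comparison step; otherwise, add the degree hypothesis explicitly and verify that Lemma~\ref{alarmsoon} and Theorem~\ref{thm} still go through with the weakened version.
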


\begin{proof}
We define the \emph{slack} of node $x$ at the beginning of round $t$ as $s_t[x]=1-\Phi_t[x]$ and the vector of slacks at the beginning of round $t$ as $\vec{s}_t$. In words, the slack of a node is the ``room'' for additional potential up to $1$. 
Recall that the overall potential at the beginning of round $1$ of phase $1$ is $||\vec{\Phi}_1||_1=n-1$. 
Also notice that for any round and any node $x$ the potential of $x$ is non-negative as shown in Claim~\ref{potbounds}.
Therefore, the overall slack with respect to the maximum potential that could be held by all the $n$ nodes at the beginning of round $1$ is $||\vec{s}_1||_1=1$.

Consider a partition of the set of nodes $\{L,H\}$, where $L$ is the set of nodes with potential at most $\tau=1-1/k^{1+\epsilon}$ at the end of the first phase, before the leader consumes its own potential in Line~\ref{leaderAlg}.\ref{reset}. That is, $\Phi_{r+1}[x] \leq \tau$ for all $x\in L$.
Assume that the slack held by nodes in $L$ at the end of the first phase is at most the overall slack at the beginning of the phase. That is, $\sum_{x\in L}s_{r+1}[x] \leq ||\vec{s}_1||_1 = 1$. 
By definition of $L$, we have that for each node $x\in L$ it is $s_{r+1}[x]=(1-\Phi_{r+1}[x])\geq 1-\tau$.
Therefore,
$|L|(1-\tau) \leq \sum_{x\in L} s_{r+1}[x] \leq 1$.
Thus, $|L| \leq 1/(1-\tau) = k^{1+\epsilon}$ and the claim follows.

Then, to complete the proof, it remains to show that $\sum_{x\in L}s_{r+1}[x]\leq 1$.
Let the scenario where $d$ is larger than the number of neighbors that each node has in each round of the first phase be called ``case 1'', and ``case 2'' otherwise.
Claim~\ref{conservation} shows that in case 1 at the end of the first phase it is $||\vec{\Phi}_{r+1}||_1=n-1$. Therefore, the slack held by all nodes is $||\vec{s}_{r+1}||_1=1$ and the slack held by nodes in $L\subseteq V$ is $\sum_{x\in L}s_t[x]\leq 1$. We show now that indeed case 1 is a worst-case scenario. That is, in the complementary case 2 where some nodes have $d$ neighbors or more in one or more rounds, the slack is even smaller. To compare both scenarios we denote the slack for each round $t$, each node $x$, and each case $i$ as $s^{(i)}_t[x]$.

Assume that some node $x$ is the first one to have $d'>d-1$ neighbors. Let $1\leq t\leq r$ be the round of the first phase when this event happened.
We claim that $||\vec{s}_{t+1}^{(2)}||_1\leq ||\vec{s}_{t+1}^{(1)}||_1$. The reason is the following.
Given that more than $d-1$ potentials are received, node $x$ increases its potential to $1$ for the rest of the epoch (cf. Lines~\ref{leaderAlg}.\ref{leadertoomany} and~\ref{otherAlg}.\ref{othertoomany}). That is, the slack of $x$ is $s_{t+1}^{(2)}[x]\leq s_t^{(2)}[x]=s_t^{(1)}[x]$. 
%Then, counting only this effect, the overall slack $||\vec{s}_{t+1}^{(2)}||_1\leq ||\vec{s}_{t+1}^{(1)}||_1$. 
Additionally, the potential shared by $x$ with all neighbors during round $t$ is $d'\Phi_{t}[x]/d>\Phi_{t}[x](1-1/d)$ (cf. Lines~\ref{leaderAlg}.\ref{potupdate} and~\ref{otherAlg}.\ref{newpot}). That is, the potential shared by $x$ with neighbors in case 2 is more than the potential that $x$ would have shared in case 1. 
Then, combining both effects (the relative increase in potential of $x$ and its neighbors') the overall slack is $||\vec{s}_{t+1}^{(2)}||_1\leq ||\vec{s}_{t+1}^{(1)}||_1$. The same argument applies to all other nodes with $d$ or more neighbors in round $t$.

Additionally, for any round $t'$ of the first phase, such that $t<t'\leq r$, we have to additionally consider the case of a node $y$ that, although it does not receive more than $d-1$ potentials, it moves to alarm status because it has received an alarm in round $t'$. 
Then, notice that the potential of $y$ is $\Phi_{t'+1}[y]=1 \geq \Phi_{t'}[y]$, and it will stay in $1$ for the rest of the epoch (cf. Lines~\ref{leaderAlg}.\ref{alarminsecondleader} and~\ref{otherAlg}.\ref{alarminsecondother}). Therefore, the slack of $y$ is $s_{t+1}^{(2)}[y]\leq s_{t+1}^{(1)}[y]$. 

Combining all the effects studied over all rounds, the slack at the end of the first phase is $||\vec{s}_{r+1}^{(2)}||_1\leq ||\vec{s}_{r+1}^{(1)}||_1$. 
Given that $L\subseteq V$, it is $\sum_{x\in L}s_{r+1}^{(2)}[x] \leq ||\vec{s}_{r+1}^{(2)}||_1\leq ||\vec{s}_{r+1}^{(1)}||_1 \leq 1$ which completes the proof.

\hfill$\square$
\end{proof}

%%%%%%%%%%%%%%%%%%%%%%%%%%%%%%%%%%%%%%%%%%%%%%%%%%%%%%%%%%%%%%%%%%%%%%%%%%%

In our last lemma, we show that if $k^{1+\epsilon}<n$ the leader detects the error.

\begin{lemma}
\label{alarmsoon}
If $k^{1+\epsilon}<n$, $\epsilon>0$,
and $r\geq(4+2\epsilon- 2\ln(k^\epsilon-1)/\ln k)dk^2\ln k$,
within the following $k^{1+\epsilon}$ rounds after the first phase of the \name protocol, 
the leader has received an alarm message.
\end{lemma}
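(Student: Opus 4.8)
The plan is to combine Lemma~\ref{unalarmed} with a reachability argument over the $k^{1+\epsilon}$ rounds that follow the first phase. First, observe that the first phase consists of $r$ rounds where $r$ is chosen to be at least $(4+2\epsilon-2\ln(k^\epsilon-1)/\ln k)dk^2\ln k$; I expect this exact value of $r$ to be dictated by the following tension. On one hand, if $d$ remained an upper bound on the degree throughout the first phase and no alarm was raised, then the $d$-lazy random walk analysis (Theorem~\ref{koucky}, used exactly as in Lemma~\ref{correct}, now with $n$ in place of $k$ and ${\cal D}<n$) forces the potential distribution to be close to uniform: after $r$ rounds, $\|\vec p_{r+1}-\vec I/n\|_2^2\le \exp(-r/(d{\cal D}n))$. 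Plugging in the stated $r$ and using ${\cal D}<n\le$ (something like) $k^{1+\epsilon}$ gives a bound of roughly $(k^\epsilon-1)^2/k^{4+2\epsilon}$, hence every coordinate satisfies $p_{r+1}[x]\le 1/n+(k^\epsilon-1)/k^{2+\epsilon}$. Since the total potential is $n-1$ (Claim~\ref{conservation}), the actual potential of each node is at most $(n-1)(1/n+(k^\epsilon-1)/k^{2+\epsilon})<1-1/k^{1+\epsilon}=\tau$ — so in this "clean" case \emph{every} node ends the first phase below the threshold $\tau$, and by Line~\ref{otherAlg}.\ref{otherthreshold} every node raises an alarm. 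On the other hand, if $d$ was \emph{not} an upper bound somewhere in the first phase, then some node received $\ge d$ messages and raised an alarm directly (Line~\ref{otherAlg}.\ref{othertoomany}). Either way, at the end of the first phase at least one node is in alarm status.

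The remaining work is dissemination. By Lemma~\ref{unalarmed} there are at most $k^{1+\epsilon}$ nodes with potential at most $\tau$; the crucial point is the contrapositive — if the clean case held, all $n>k^{1+\epsilon}$ nodes would be below $\tau$, a contradiction, so either the clean case fails (a direct alarm was raised) or at least $n-k^{1+\epsilon}\ge 1$ node is above $\tau$, and the complementary at-most-$k^{1+\epsilon}$ nodes below $\tau$ include whoever will alarm. Actually the cleanest route is: in all cases at least one node holds $status=alarm$ at the start of the $p$ phases' tail, and an alarm, once present, is rebroadcast every round and propagates to every neighbor (case 3 of the alarm rule, Lines~\ref{leaderAlg}.\ref{leadertoomany}, \ref{otherAlg}.\ref{othertoomany}). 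By $1$-interval connectivity the network is connected in every round, so its dynamic diameter is at most $n-1<k^{1+\epsilon}$ hops... but wait — the first phase has ended and phases $2,\dots,p$ still run, each of $r$ rounds, and alarm messages keep flowing during them. So within at most $D\le n-1$ further rounds the alarm reaches the leader, and since the algorithm runs many more than $k^{1+\epsilon}$ rounds after the first phase (namely $(p-1)$ more phases of $r$ rounds each), the $k^{1+\epsilon}$-round window suffices provided $D<k^{1+\epsilon}$. Here I would use that $D\le n-1$ and argue $n-1<k^{1+\epsilon}$ — hmm, but the hypothesis is $k^{1+\epsilon}<n$, the opposite inequality.

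So the dissemination bound cannot rely on $D<k^{1+\epsilon}$; instead it must rely on the alarm-carrying nodes being \emph{spread out}. The right argument: Lemma~\ref{unalarmed} says at most $k^{1+\epsilon}$ nodes are below $\tau$, so among \emph{any} set of $k^{1+\epsilon}+1$ nodes at least one is above $\tau$ and hence alarmed. Consider a shortest time-respecting path from the leader outward; within $j$ rounds the leader "hears from" a BFS ball that, by connectivity, grows by at least one new node per round until it is exhausted. If the leader has not heard an alarm within the first $k^{1+\epsilon}$ such rounds, its $k^{1+\epsilon}$-round information ball contains $>k^{1+\epsilon}$ nodes, one of which is alarmed, and that alarm would have reached the leader within those rounds — contradiction. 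Formalizing this "ball grows by one per round under $1$-interval connectivity" claim, and checking that the alarm is already present from the very first of these rounds (it is, because the threshold alarm and the too-many-neighbors alarm both fire by the end of the first phase / during it), is the main obstacle; everything else is the random-walk estimate already packaged in Theorem~\ref{koucky} and the bookkeeping of the chosen $r$. I would therefore structure the proof as: (i) show at least one node alarms by the end of phase $1$, splitting into the degree-exceeded case and the clean case where the walk estimate with the stated $r$ pushes all potentials below $\tau$; (ii) invoke Lemma~\ref{unalarmed} to guarantee alarmed nodes are dense enough that any $(k^{1+\epsilon}+1)$-node ball contains one; (iii) use $1$-interval connectivity to grow the leader's information ball by one node per round, concluding the alarm arrives within $k^{1+\epsilon}$ rounds.
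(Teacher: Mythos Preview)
Your final outline (steps (ii) and (iii)) is essentially the paper's proof: Lemma~\ref{unalarmed} gives $|L|\le k^{1+\epsilon}$, so $n>k^{1+\epsilon}$ forces $|H|\ge 1$, i.e.\ at least one node has $\Phi>\tau$ and alarms at Line~\ref{otherAlg}.\ref{thresholdalarm}; then by $1$-interval connectivity the alarmed set absorbs one new node from $L$ per round, so after at most $|L|\le k^{1+\epsilon}$ rounds the leader is alarmed. Your ``ball grows toward the leader'' phrasing is the dual of the paper's ``$H$ grows outward'' phrasing and is equally valid.

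Where you go wrong is step (i) and the surrounding random-walk discussion. Two concrete errors. First, you have the threshold test backwards: Line~\ref{otherAlg}.\ref{otherthreshold} raises an alarm when $\Phi>\tau$, not when $\Phi\le\tau$. So your ``clean case'' conclusion (``every node ends below $\tau$, hence every node raises an alarm'') is exactly inverted. Second, the walk estimate you invoke cannot be made to work in this regime anyway: Theorem~\ref{koucky} gives a contraction factor $\exp(-r/(d\mathcal{D}n))$, and with $n>k^{1+\epsilon}$ the stated $r$ (which scales like $dk^2\ln k$) is far too small to force mixing on $n$ nodes; moreover $d=k^{1+\epsilon}<n$ need not bound degrees, so the theorem's hypotheses may fail outright. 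The paper's random-walk computation at the start of its proof is not doing what you think: it analyzes the \emph{hypothetical} $k=n$ scenario to check that $\tau$ is chosen consistently with $r$ (i.e.\ that no false alarm would occur when the estimate is correct). It plays no logical role in establishing the lemma for $k^{1+\epsilon}<n$; that case is handled entirely by Lemma~\ref{unalarmed} plus the one-node-per-round dissemination, which you already have. Drop step (i) and the case split, and your argument is the paper's.
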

\begin{proof}

Using Theorem~\ref{koucky}, we know that after phase $1$ of $r\geq(4+2\epsilon- 2\ln(k^\epsilon-1)/\ln k)dk^2\ln k$ rounds, if $k=n$, the distribution is such that 
\begin{align*}
\left|\left|\vec{p}_{r+1} - \frac{\vec{I}}{k}\right|\right|_2^2 
&\leq \left(1-\frac{1}{d{\cal D}k}\right)^r\left|\left|\vec{p}_1 - \frac{\vec{I}}{k}\right|\right|_2^2\\
&\leq \exp\left(-\frac{r}{d{\cal D}k}\right)\nonumber\\
&\leq \exp\left(-\frac{(4+2\epsilon - 2\ln(k^\epsilon-1)/\ln k)dk^2\ln k}{d{\cal D}k}\right),\textrm{ given that $k=n>{\cal D}$,}\nonumber\\
&\leq \exp\left(-(4+2\epsilon- 2\ln(k^\epsilon-1)/\ln k)\ln k\right)\nonumber\\
&= 1/k^{4+2\epsilon- 2\ln(k^\epsilon-1)/\ln k}.\nonumber
\end{align*}

Given that for any node $j$, it is $(p_{r+1}[j] - 1/k)^2 \leq \left|\left|\vec{p}_{r+1} - \frac{\vec{I}}{k}\right|\right|_2^2$, we have that $(p_{r+1}[j]-1/k)^2 \leq 1/k^{4+2\epsilon- 2\ln(k^\epsilon-1)/\ln k}$. Hence, it is  $p_{r+1}[j] \leq 1/k + 1/k^{2+\epsilon- \ln(k^\epsilon-1)/\ln k}$ for any node $j$.
Moreover, if $k=n$ the total potential in the network would be $k-1$ (cf. Claim~\ref{conservation}) and no individual node should have potential larger than $(k-1)(1/k + 1/k^{2+\epsilon- \ln(k^\epsilon-1)/\ln k})$. We show that the latter is at most $\tau = 1-1/k^{1+\epsilon}$ as follows.
\begin{align*}
(k-1)(1/k + 1/k^{2+\epsilon- \ln(k^\epsilon-1)/\ln k}) &\leq 1-1/k^{1+\epsilon}\\
%1 + k/k^{2+\epsilon- \ln(k^\epsilon-1)/\ln k} - 1/k - 1/k^{2+\epsilon- \ln(k^\epsilon-1)/\ln k} &\leq 1-1/k^{1+\epsilon}\\
(k-1)/k^{2+\epsilon- \ln(k^\epsilon-1)/\ln k} &\leq (k^\epsilon -1)/k^{1+\epsilon}\\
%k^{2+\epsilon- \ln(k^\epsilon-1)/\ln k}/(k-1) &\geq k^{1+\epsilon}/(k^\epsilon -1)\\
k^{1 - \ln(k^\epsilon-1)/\ln k} &\geq (k-1)/(k^\epsilon -1)\\
\left(1 - \frac{\ln(k^\epsilon-1)}{\ln k}\right) \ln k &\geq \ln(k-1)-\ln(k^\epsilon -1)\\
\ln k &\geq \ln(k-1).
\end{align*}
And the latter is true for any $k>1$.

Consider a partition of the set of nodes $\{L,H\}$, 
where $L$ is the set of nodes with potential at most $\tau=1-1/k^{1+\epsilon}$ at the end of the first phase. 
At the end of the first phase, the size of $L$ is at most $k^{1+\epsilon}$ (cf. Lemma~\ref{unalarmed}), and the size of $H$ is at least $1$ because $n>k^{1+\epsilon}$. 
Thus, there is at least one node changing to alarm status in Line~\ref{otherAlg}.\ref{thresholdalarm} in round $1$ of phase $2$, and due to $1$-interval connectivity at least one new node moves from $L$ to $H$ in each of the following rounds.
Thus, the claim follows.

\hfill$\square$
\end{proof}

%%%%%%%%%%%%%%%%%%%%%%%%%%%%%%%%%%%%%%%%%%%%%%%%%%%%%%%%%%%%%%%%%%%%%%%%%%%

Based on the above lemmata, we establish our main result in the following theorem.

\begin{theorem}
\label{thm}
Given an \ADN with $n$  nodes, after running \name  for each estimate $k=2,3,\dots,n$ with parameters 
\begin{align*}
d &= k^{1+\epsilon},\\ 
p &= \left\lceil \frac{(2+\epsilon)k^{1+\epsilon}}{1-1/k}\ln k \right\rceil,\\ 
r &= \left\lceil \left(4+2\epsilon +\max\left\{0,- \frac{2\ln(k^\epsilon-1)}{\ln k}\right\}\right) dk^{2+2\epsilon}\ln k \right\rceil,\\
\tau &= 1-1/k^{1+\epsilon},
\end{align*}
where $\epsilon>0$,
all nodes stop after $\sum_{k=2}^n (pr+k)$ rounds of communication and output $n$. 
\end{theorem}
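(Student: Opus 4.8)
The plan is to assemble the theorem directly from Lemmas~\ref{correct}, \ref{ksquare}, \ref{unalarmed}, and~\ref{alarmsoon}, plus the structural Claims~\ref{conservation} and~\ref{potbounds}. The main task is a case analysis on the relationship between the current estimate $k$ and the true size $n$, showing that the leader changes status to \emph{done} in epoch $k=n$ and in no earlier epoch, and then a bookkeeping argument for the global round count and simultaneous termination.

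First I would verify that the stated parameters satisfy the hypotheses of all four lemmas with the chosen $d=k^{1+\epsilon}$. Lemma~\ref{correct} needs $p\geq\frac{k}{1-1/k}\ln(k(k-1))$ and $r\geq 4dk^2\ln k$; Lemma~\ref{ksquare} needs $p\geq\frac{(2+\epsilon)k^{1+\epsilon}}{1-1/k}\ln k$ and $r\geq(4+2\epsilon)dk^{2+2\epsilon}\ln k$; Lemma~\ref{alarmsoon} needs $r\geq(4+2\epsilon-2\ln(k^\epsilon-1)/\ln k)dk^2\ln k$ (this is why the $\max\{0,\cdot\}$ term appears in $r$, so that $r$ dominates all three lower bounds simultaneously, noting $k^{2+2\epsilon}\geq k^2$). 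Since $d\geq k^{1+\epsilon}\geq n$ whenever $k^{1+\epsilon}\geq n$, the ``$d$ is an upper bound on $\Delta+1$'' precondition of Theorem~\ref{koucky} holds in exactly those cases, which is what Lemmas~\ref{correct} and~\ref{ksquare} require.

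Next, the case analysis for a fixed epoch $k$. Case $k=n$: by Lemma~\ref{correct}, provided no node ever sees $\geq d$ neighbors, $\rho$ lands in $[k-1-1/k,\,k-1]$, so the leader sets $status\gets done$ at Line~\ref{leaderAlg}.\ref{range}; since $d=k^{1+\epsilon}\geq n>\Delta$ when $k=n$, the alarm conditions 2) and 3) never trigger, and by Lemma~\ref{unalarmed}/Lemma~\ref{alarmsoon}'s computation (the $\tau$ bound) the threshold alarm 1) does not trigger either, so the leader genuinely reaches the \emph{done} branch. Case $k<n$: I must show the leader does \emph{not} terminate, i.e.\ either an alarm fires or $\rho\notin[k-1-1/k,k-1]$. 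Subcase $k<n\leq k^{1+\epsilon}$: Lemma~\ref{ksquare} gives $\rho>k-1$, so the range test at Line~\ref{leaderAlg}.\ref{range} fails and the leader proceeds to the next epoch. Subcase $n>k^{1+\epsilon}$: here $d=k^{1+\epsilon}$ need not bound $\Delta+1$; but by Lemma~\ref{unalarmed} at most $k^{1+\epsilon}<n$ nodes finish phase~1 with potential $\leq\tau$, hence at least one node has potential $>\tau$ and raises an alarm, and by Lemma~\ref{alarmsoon} (whose hypothesis on $r$ is met) that alarm reaches the leader within $k^{1+\epsilon}\leq p r$ additional rounds — well within the epoch, since the remaining $p-1\geq 1$ phases of $r$ rounds each dwarf $k^{1+\epsilon}$. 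In alarm status the leader does not reach the \emph{done} branch (the guard at Line~\ref{leaderAlg}.\ref{range} requires $status=normal$), so it increments $k$. Thus the first epoch in which the leader terminates is exactly $k=n$.

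Finally, termination propagation and the round count. When the leader enters \emph{done} in epoch $n$, it broadcasts $\langle status\rangle$ for $k=n$ rounds (Line~\ref{leaderAlg}.\ref{leadernotification}); since the dynamic diameter satisfies $D\leq n-1$, $1$-interval connectivity guarantees the \emph{done} message reaches every node within $n$ rounds, so every non-leader sets $status\gets done$ during its own $k$-round dissemination loop (Line~\ref{otherAlg}.\ref{othernotification}) and then all nodes exit the \texttt{while} loop together and return $k=n$; in earlier epochs no node hears \emph{done}, so all synchronously advance to estimate $k+1$ — here I should remark that leader and non-leaders stay lock-step because every epoch, phase, and round loop has a bound depending only on $k$ (and $p,r$ depend only on $k$), which is common knowledge. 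Counting rounds: epoch $k$ consists of $p$ phases of $r$ rounds plus the $k$-round dissemination loop, i.e.\ $pr+k$ rounds, and epochs run for $k=2,\dots,n$, giving the claimed total $\sum_{k=2}^n(pr+k)$. \textbf{The main obstacle} I anticipate is the $n>k^{1+\epsilon}$ subcase: one must argue carefully that the alarm not only is generated but actually reaches the leader \emph{before} the epoch ends and \emph{before} any spurious satisfaction of the range test could cause premature termination — this is precisely where Lemma~\ref{alarmsoon}'s ``$k^{1+\epsilon}$ rounds'' bound and the fact that an alarmed leader skips the \emph{done} branch must be invoked together, and where I must double-check that the $\max\{0,\cdot\}$ inflation of $r$ keeps Lemma~\ref{alarmsoon}'s hypothesis satisfied for all $k$ (including small $k$ where $k^\epsilon-1$ may be less than $1$, making the correction term positive).
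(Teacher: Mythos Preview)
Your proposal is correct and follows essentially the same three-case decomposition ($k=n$, $k<n\le k^{1+\epsilon}$, $k^{1+\epsilon}<n$) as the paper, invoking Lemmas~\ref{correct}, \ref{ksquare}, \ref{unalarmed}, \ref{alarmsoon} in the same roles and finishing with the same bookkeeping for dissemination and the round count. You are in fact more explicit than the paper on two points the paper leaves implicit: that the stated $p,r$ simultaneously dominate the hypotheses of all three lemmas (the paper just asserts ``the above parameters fulfill the conditions''), and that no alarm can fire in the $k=n$ epoch (the degree alarm because $d=n^{1+\epsilon}>n-1\ge\Delta$, the threshold alarm by the computation inside the proof of Lemma~\ref{alarmsoon} showing every node's potential is at most $\tau$ when $k=n$); the paper's proof of the theorem silently relies on both.
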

\begin{proof}
Notice that the above parameters fulfill the conditions of the previous lemmas. 

First we prove that \name is correct. To do so, it is enough to show that for each estimate $k<n$ the algorithm detects the error and moves to the next estimate, and that if otherwise $k=n$ the algorithm stops and outputs $k$. We consider three cases: $k=n$, $k<n\leq k^{1+\epsilon}$, and $k^{1+\epsilon}<n$, for a chosen value of $\epsilon>0$.

Assume first that $k<n\leq k^{1+\epsilon}$. Then, even if the leader does not receive an alarm during the execution, as shown in Lemma~\ref{ksquare}, at the end of the epoch in Line~\ref{leaderAlg}.\ref{range} the leader will detect that $\rho$ is out of range and will not change its status to done. Therefore, no other node will receive a termination message (loop in Line~\ref{leaderAlg}.\ref{leadernotification}), and all nodes will continue to the next epoch.

Assume now that $k^{1+\epsilon}<n$. Lemma~\ref{alarmsoon} shows that within the following $k^{1+\epsilon}$ rounds after the first phase the leader has received an alarm message, even if no node has more than $d-1$ neighbors during the execution and alarms due to this are not triggered. For the given value of $p$ and $k\geq 2$, the epoch has more than one phase. Therefore, within $k^{1+\epsilon}$ rounds into the second phase the leader will change to alarm status in Line~\ref{leaderAlg}.\ref{alarminsecondleader}, will not change its status to done later in this epoch, and no other node will receive a termination message. Hence, all nodes will continue to the next epoch. 

Finally, if $k=n$, Lemma~\ref{correct} shows that the accumulated potential $\rho$ will be $k-1-1/k\leq \rho\leq k-1$. Thus, in Line~\ref{leaderAlg}.\ref{range} the leader will change its status to done, and in the loop of Line~\ref{leaderAlg}.\ref{leadernotification} will inform all other nodes that the current estimate is correct. The number of iterations of such loop are enough due to $1$-interval connectivity. 

The claimed running time can be obtained by inspection of the algorithm, either for the leader or non-leader since they are synchronized.
Refer for instance to the leader algorithm in Figure~\ref{leaderAlg}.
The outer loop in Line~\ref{leaderAlg}.\ref{epochs} corresponds to each epoch with estimates $k=2,3,\dots, n$. For each epoch, Line~\ref{leaderAlg}.\ref{phases} starts a loop of $p$ phases followed by $k$ rounds in Line~\ref{leaderAlg}.\ref{leadernotification}. 
%The first phase includes $r$ rounds followed by additional $k^2$ rounds, whereas the remaining $p-1$ phases have only $r$ rounds. 
Each of the $p$ phases has $r$ rounds. 
Thus, the overal number of rounds is 
$\sum_{k=2}^n (pr+k)$.

\hfill$\square$
\end{proof}

%%%%%%%%%%%%%%%%%%%%%%%%%%%%%%%%%%%%%%%%%%%%%%%%%%%%%%%%%%%%%%%%%%%%%%%%%%%

Choosing $\epsilon = \log_k 2$, the following holds.

\begin{corollary}
The time complexity of \name is $O(n^5 \log^2 n)$.
\end{corollary}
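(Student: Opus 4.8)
The plan is to substitute $\epsilon = \log_k 2$ into the parameter settings of Theorem~\ref{thm} and bound the running time $\sum_{k=2}^n (pr+k)$ epoch by epoch. The reason this particular choice works is that it forces $k^\epsilon = 2$, so that $k^{1+\epsilon} = 2k$ and $k^{2+2\epsilon} = 4k^2$ are polynomials in $k$ with small constant exponents, and, crucially, it makes $\ln(k^\epsilon-1) = \ln 1 = 0$, so the correction term $\max\{0,-2\ln(k^\epsilon-1)/\ln k\}$ in the definition of $r$ vanishes. First I would note that $0 < \epsilon = \ln 2/\ln k \le 1$ for every integer $k \ge 2$, so Theorem~\ref{thm} and all the lemmas feeding into it apply with this (epoch-dependent) value of $\epsilon$, and its running-time bound holds verbatim.

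With $\epsilon = \log_k 2$ the parameters become $d = 2k$, $p = \lceil (2+\epsilon)\,2k\,\ln k/(1-1/k)\rceil$, and $r = \lceil (4+2\epsilon)(2k)(4k^2)\ln k\rceil$. Using $2+\epsilon \le 3$, $4+2\epsilon \le 6$, and $1/(1-1/k) \le 2$ for $k \ge 2$, I would conclude $p = O(k\ln k)$ and $r = O(k^3\ln k)$, hence $pr = O(k^4\ln^2 k)$; the additive $k$ termination-dissemination rounds are absorbed into this. Summing over $k = 2,\dots,n$ and using $\sum_{k=2}^n k^4 = \Theta(n^5)$ together with $\ln^2 k \le \ln^2 n$ gives $\sum_{k=2}^n (pr+k) = O(n^5\ln^2 n)$, which is the claimed bound.

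The computation itself is routine; the only points that need checking — and hence the main (minor) obstacle — are, first, that $\epsilon = \log_k 2$ really does eliminate the logarithmic correction in $r$ rather than merely shrinking it, and second, that the parameter values of Theorem~\ref{thm} with this $\epsilon$ still dominate the lower bounds on $p$ and $r$ required by Lemmas~\ref{correct}, \ref{ksquare} and~\ref{alarmsoon}. The latter follows from the monotonicity facts $k^{1+\epsilon} \ge k$, $k^{2+2\epsilon} \ge k^2$ and $(2+\epsilon)\ln k \ge 2\ln k \ge \ln(k(k-1))$, so no genuine difficulty arises once the substitution is carried out carefully.
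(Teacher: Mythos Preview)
Your proposal is correct and follows essentially the same approach as the paper: substitute $\epsilon=\log_k 2$ so that $k^\epsilon=2$, observe this kills the $\max\{0,-2\ln(k^\epsilon-1)/\ln k\}$ term, bound $2+\epsilon\le 3$ and $4+2\epsilon\le 6$, and sum the resulting $O(k^4\ln^2 k)$ epoch cost over $k=2,\dots,n$. Your added sanity check that the Theorem~\ref{thm} parameters still dominate the lower bounds in Lemmas~\ref{correct}, \ref{ksquare}, \ref{alarmsoon} is not needed (the theorem already establishes this for every $\epsilon>0$), but it is correct and does no harm.
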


\begin{align*}
\sum_{k=2}^n (pr+k)
&= \sum_{k=2}^n \left(\left\lceil \frac{(2+\epsilon)k^{2+\epsilon}}{k-1}\ln k \right\rceil \left\lceil \left(4+2\epsilon +\max\left\{0,- \frac{2\ln(k^\epsilon-1)}{\ln k}\right\}\right) k^{3+3\epsilon}\ln k \right\rceil + k \right)\\
&= \sum_{k=2}^n \left(\left\lceil \frac{2(2+\log_k 2)k^{2}}{k-1}\ln k \right\rceil \left\lceil \left(4+2\log_k 2 \right) 2^3k^{3}\ln k \right\rceil + k \right)\\
&\leq \sum_{k=2}^n \left(\left\lceil \frac{6k^{2}}{k-1}\ln k \right\rceil \left\lceil 48k^{3}\ln k \right\rceil + k \right)\\
&\in O(n^5 \log^2 n).
\end{align*}

%\textcolor{red}{--------- drafts follow}
%
%{\bf Alternative 1:}
%
%Consider the algorithm where the fraction shared corresponds to $k^2$ nodes. That is, Lemmas~\ref{correct} and~\ref{ksquare} still apply setting $d=k^2$.
%
%%Additionally, consider the following update of potential. For each round, each node transmits a fraction of its potential. After receiving from all neighbors, a node decreases its potential only by the fraction shared (as opposed to many fractions as before) and increases its potential by the sum of all fractions received divided by $k-1$, rounding down to $1$ if the new potential exceeds $1$. \mig{MM: is this matrix still stochastic? node $i$ decreases by a fraction of $1/k$ but neighbors are taking only $|N(i)|/k^2$, which may be less}
%
%Additionally, in the algorithm, after update, the potential is rounded to stay in $[\epsilon,1]$.
%
%No new alarms are included, besides the thresholds proved in the previous section.
%
%Claim: if we can prove that the total potential in the network never goes below $n-1$, the analysis of the previous section holds, properly adjusted to fit this modification of the algorithm.
%
%Negative answer: 
%Consider for instance a star with the leader in the center and $k=3$, that is, in each round nodes share $1/2$.
%After the first round, all nodes but the leader are left in $1/2$ but the leader in $1$, which is roughly $n/2$.
%In this case the leader knows that $k$ is wrong, but a similar example with a lollypop graph would have the same problem.

%removed special proof header

%!TEX root = ./KM_counting_SIAM.tex

\section{Extensions}
\label{s:extensions}

We argue that \name\ can be extended to compute the sum of values
stored in the nodes, and thus also the average (as it computes the number of
nodes $n$), and other functions.
Assume that each node of the \ADN\ initially stores a value, represented as
a sequence of bits. 
W.l.o.g. we could assume that the value stored at the leader is zero;
otherwise, the nodes could compute the sum of other initial values
(with the leader value set up to $0$), and later the leader could propagate its
actual initial value appended to the message ``done'' at the end of the execution
to be added to the computed sum of other nodes.

The modified \name\ 
%appends the potential to the beginning of the sequence. 
prepends the potential to the sequence.
Instead of sending potential by the original \name, each node transmits its current sequence (in which the potential stands in the first location).
Changes at each position of the sequence are done independly by the same algorithm as used for the potential, cf. Figures~\ref{leaderAlg} and~\ref{otherAlg}. 
Re-setting the values, in the beginning of each epoch, means
putting back the initial values of the sequence.
It means that the modified algorithm maintains potential in exactly the same
way as the original \name, regardless of the initial values. 
At the end of some epoch, with number corresponding to the number of
nodes $n$, all nodes terminate.
When it happens, each node recalls the sequence stored in it at the end
of the first phase of the epoch, multiplies the values stored at each position
of the sequence by the epoch number $n$,
and rounds each of the results to the closest integer; 
then it sums up the subsequent values multiplied by corresponding (consecutive) powers of $2$.
Note that such ``recalling'' could be easily implemented by storing
and maintaining the sequence after the first phase of each epoch.
%In particular,
%at some point the leader returns accurate number of nodes $n$.
%When this happens, instead of sending a simple message ``done'', it 
%rounds up the values stored at each position of the sequence,
%represents it as a binary vector,
%(adds its own initial value if different than $0$,)
%and keeps sending them for the next $n-1$ rounds,
%while during that time all other nodes relay this message upon receiving.
%Then all nodes terminate.

We argue that the computed value is the sum of the initial values.
It is enough to analyze how the modified algorithm processes values at
one position of the sequence, as positions are treated independently; 
therefore, w.l.o.g. we assume that each node has value $0$ or $1$ in the beginning.
Consider the last epoch before the leader sends the final sequence (in our
case, representing one value). In the beginning of the epoch,
the values are re-set to the original one, and manipulated independently
according to the rules in Figures~\ref{leaderAlg} and~\ref{otherAlg}.
Therefore, let us focus on the first phase of this epoch.
Since we already proved that the estimate %number 
of the last epoch is equal 
to the number of nodes, the value of $d$ in this epoch (and thus also 
in its first phase) 
is an upper bound 
on the node degree.
%, and t
Thus, the mass distribution scaled down by
the sum of the initial values behaves exactly the same as the probabilities 
of being at nodes in the corresponding round of the lazy random walk, with
parameter $d$ and starting from initial distribution equal to the
initial values divided by the sum. Since the length of the phase is set up to
guarantee that the distribution is close to the stationary uniform
within error $1/n$, 
and the sum of bits is not bigger than $n$,
at the end of the phase the value stored by each node is close to 
the sum (i.e., scaling factor) divided by $n$ by at most $1/n^4$  
(cf. Equation~\ref{distance}).
Therefore, after multiplying it by $n$, each node gets value of sum
within error of at most $1/n^3$, which after rounding will
give the integer equal to the value of the sum.

Once having the number $n$ and the sum, each node can
compute the average. 
As argued in~\cite{KDGgossip}, the capacity of computing the sum
of the input values makes possible the computation of more complex 
functions. Moreover, as opposed to~\cite{KDGgossip} where the computation
only converges, our approach outputs the exact sum. Therefore, the extension
to database queries that can be approximated using \emph{linear synopses}
\footnote{Additive functions on multisets, e.g. $f(A\cup B)=f(A)+f(B)$.} is straightforward.
Boolean functions $f:\{0,1\}^n\to\{0,1\}$, such as 
AND (sum  $= n$), OR (sum $>0$), and XOR (sum $=1$),
as well as their complementaries 
NAND (sum  $\neq n$), NOR (sum $= 0$), and XNOR (sum $\neq 1$),
can also be implemented having $n$ and the sum.
\dk{This applies also to other ``symmetric'' Boolean functions,
i.e., which do not depend on the order of variables,
as they could be computed based on computed sum of ones and $n$~\cite{KRANAKISboolean}.}
Maximum ($L_\infty$ norm) and minimum can be computed 
subsequently by flooding. That is, each node broadcasts the maximum 
and minimum input values seen so far. Due to $1$-interval connectivity 
within $n$ rounds all nodes have the answers.

\dk{Note that all these computations, including the \name, could be done
using only polynomial estimates of values, that is, with messages
of length $O(\log n)$, multiplied by the maximum number of coordinates
of any of the initial values. This could be also traded for time:
we could use only messages of length $O(\log n)$ with time increased
by the maximum number of coordinates of any initial value (which is still
polynomial in the size of the input,\footnote{% 
The input in this case is distributed among the nodes, and each node possesses 
at least one bit} 
which in this case is at least 
$n$ plus the maximum number of coordinates).
}

\section{Open Directions}

Straightway questions emerging from our work include existence of polynomial (in $n$)
lower bound and improvement of our upper bound.
One of the potential ways could be through investigating bi-directional relationships between random processes and computing algebraic functions in \ADN.
Extending the range of polynomially computable functions is another
intriguing future direction.
Finally, generalizing the model by not assuming connectivity in every round
or dropping assumption on synchrony could introduce even more challenging aspects of
communication and computation, including group communication and its impact
on the common knowledge about the system parameters.

\section*{Acknowledgments}
The authors would like to thank Michal Kouck{\`y} and Alessia Milani for useful discussions.

\bibliographystyle{abbrv}
\bibliography{Comprehensive_2010}

\end{document}